\documentclass[10 pt, conference]{IEEEtran}
\IEEEoverridecommandlockouts                       
\usepackage{balance}
\usepackage{url}
\usepackage[utf8]{inputenc}
\usepackage[T1]{fontenc}
\usepackage{amsmath,amssymb,amsfonts}
\usepackage{graphicx}
\usepackage{subfigure}
\usepackage{textcomp}
\usepackage{xcolor}
\usepackage{verbatim}
\usepackage{makecell}
\usepackage{booktabs}
\usepackage{cite}
\usepackage{caption}
\usepackage{extarrows}
\usepackage{bm}
\usepackage{extarrows}
\usepackage{lettrine}
\usepackage[implicit=false]{hyperref}
\usepackage{amsthm}
\usepackage{algorithm,algorithmic}
\usepackage{underscore}

\newtheorem{proposition}{Proposition}

\begin{document}
\title{\huge
Blind Interference Suppression for IRS-Aided \\Robust Wireless Communications} 

\author{
\large{Tao Wang} \href{https://orcid.org/0000-0002-8695-5400}{\includegraphics[scale=0.08]{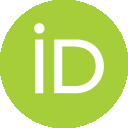}}, 
\large{Xiaohui Zhang}  \href{https://orcid.org/0009-0007-1091-4446}{\includegraphics[scale=0.08]{LIS/figures/orcid_icon.png}},
\large{Hehe Ban} \href{https://orcid.org/0009-0006-4691-730X}{\includegraphics[scale=0.08]{LIS/figures/orcid_icon.png}}, 
\large{Yiwei Guo}  \href{https://orcid.org/0009-0007-2364-7973}{\includegraphics[scale=0.08]{LIS/figures/orcid_icon.png}},
\large{and Ming Yi}

\thanks{This work was supported in part by the Mobile Information Networks National Science and Technology Major Project under Grant 2025ZD1303100, and the Henan Provincial Major Science and Technology Project under Grant 241110210300.}
\thanks{Tao Wang, Xiaohui Zhang, Hehe Ban, and Yiwei Guo are with the Songshan Laboratory, Zhengzhou, China (emails: taowang@songshanlab.com; zhangxiaohui@songshanlab.com; 1720826586@qq.com; 3830875@qq.com).} 
\thanks{Ming Yi is with PLA Strategic Support Force Information Engineering University, Zhengzhou, China (email: acco666666@sina.com.cn).}
}


\maketitle

\thispagestyle{empty}
\pagestyle{empty}

\begin{abstract}
The application of intelligent reflecting surfaces (IRSs) to suppress interference in wireless communication systems has recently attracted significant research attention. Most existing approaches rely on complete or partial channel state information (CSI) to configure the IRS. However, acquiring accurate CSI in IRS-assisted systems involves considerable pilot overhead and introduces non-negligible delays. This issue is further exacerbated under strong interference conditions, where interfering sources are typically non-cooperative, making CSI acquisition even more challenging. As a result, existing CSI-dependent interference suppression methods become difficult to deploy in practice.
To address these limitations, we propose a novel blind interference suppression strategy that combines a proportional phase-inversion (PPI) algorithm with the conditional sample mean (CSM) method. The proposed approach determines the IRS configuration using only the received signal power, without requiring any prior CSI.
We conduct a comprehensive performance evaluation by deriving the theoretical performance of the proposed scheme, which is subsequently verified through numerical simulations. Furthermore, simulation results across various parameter settings demonstrate that the proposed blind interference suppression scheme reduces the interference power to the level of noise, thereby achieving a marked signal-to-interference-plus-noise ratio (SINR) improvement and outperforms existing benchmark schemes.
\end{abstract}


\begin{IEEEkeywords}
Interference suppression, intelligent reflecting surface (IRS), blind beamforming, conditional sample mean (CSM).
\end{IEEEkeywords}

\section{Introduction}
Intelligent reflecting surfaces (IRSs), also referred to as reconfigurable intelligent surfaces (RISs), are recognized as a promising technology for interference suppression in next-generation wireless communication systems \cite{Jamming_Attacks_survey_2022,IRS_Anti_Jamming_survey_2025,PLS_IRS_2024}. By dynamically adjusting the phase shifts of individual reflecting elements, an IRS can destructively combine interference signals while constructively enhancing the desired signal, thereby improving the overall signal-to-interference-plus-noise ratio (SINR) \cite{You-IRS-tutorial}. Compared to conventional interference mitigation architectures such as multiple-input multiple-output (MIMO), IRS features advantages including low hardware cost, minimal energy consumption, and ease of deployment, rendering it an efficient and sustainable solution for modern secure communication systems \cite{IRS_Anti_Jamming_survey_2025}.

Many studies have explored IRS-aided interference suppression under various system models and assumptions \cite{anti_jamming_DQN_2022,railway_anti_jamming_2021,Max_Min_Fairness_anti_jamming_2024,ITSN_anti_jamming_2024,RL_anti_jamming_CSI_free_2023}. For instance, the work in \cite{anti_jamming_DQN_2022} assumes that strong interference cannot be suppressed by the IRS once activated, and instead adopts a deep Q-network (DQN) to learn the interference pattern and optimize the transmission power and IRS phase shifts accordingly. 
The study in \cite{railway_anti_jamming_2021} introduces an RIS-assisted anti-interference method for railway communications, providing an optimal beamforming solution under perfect channel state information (CSI) and a suboptimal one when the interference channel is unknown, where only the legitimate signal power is maximized by ignoring the interference. A deep reinforcement learning (DRL) approach is developed in \cite{Max_Min_Fairness_anti_jamming_2024} to maximize the minimum SINR in a multi-user system, demonstrating robustness to imperfect interference CSI. The work in \cite{ITSN_anti_jamming_2024} proposes an optimization-driven DRL algorithm for integrated terrestrial-satellite networks, deriving a performance lower bound to account for interference uncertainty. In scenarios where the interference channel is unavailable due to its non-cooperative nature, \cite{RL_anti_jamming_CSI_free_2023} designs a DRL-based framework for UAV trajectory and RIS configuration using only received data rate feedback. 

Despite their contributions, the above studies generally assume perfect knowledge of the legitimate channel \cite{anti_jamming_DQN_2022,railway_anti_jamming_2021,Max_Min_Fairness_anti_jamming_2024,ITSN_anti_jamming_2024,RL_anti_jamming_CSI_free_2023}. Many also presume partial knowledge of the interference channel \cite{Max_Min_Fairness_anti_jamming_2024,ITSN_anti_jamming_2024}. Although CSI acquisition techniques for IRS-assisted systems have been widely studied \cite{Two_Timescale_Channel_Estimation_Dai,Sensing_RIS_Independent_CSI_Acquisition_2023,Parametric_CE_2024, Blind_CE_2024, Blind_CE_2025} along with feedback mechanisms \cite{DL_4_CSI_feedback_2025}, obtaining accurate CSI under aggressive interference in dynamic environments remains challenging for the following reasons.
\begin{enumerate}
\item The non-cooperative nature of interference makes pilot-based estimation of the interference channel infeasible. Geometry-based channel modeling \cite{railway_anti_jamming_2021,Max_Min_Fairness_anti_jamming_2024,ITSN_anti_jamming_2024,RL_anti_jamming_CSI_free_2023} offers an alternative but is limited to specific propagation environments (e.g., it is less applicable in rich-scattering scenarios or at lower carrier frequencies).
\item Estimating the legitimate channels (including the direct and IRS-reflected paths) requires considerable pilot overhead and delay even in interference-free conditions. Under strong interference, the desired signal can be overwhelmed, leading to estimation delays that exceed the channel coherence time, thereby rendering the acquired CSI obsolete.
\item The computational complexity of estimating IRS-related channels is high, often involving large-scale matrix operations. Moreover, such estimation is not yet supported by current network protocols.
\end{enumerate}

In light of these practical constraints, several CSI-free or blind beamforming methods have been proposed to configure the IRS without explicit CSI \cite{RFocus,CSM_2023,ZO_conference_2025, ZO_magazine, CS_conference_2025,CS_journal_2025}. 
For example, \cite{RFocus} introduced a method based on the conditional sample mean (CSM) of the received power to determine the ON/OFF state of each reflecting element. This approach was later extended to IRSs with discrete phase shifts in \cite{CSM_2023}, showing that the CSI-free CSM method can achieve performance close to that of the CSI-based closest point projection (CPP) method, which rounds the ideal continuous phase shifts to the nearest discrete values. 
To extend the concept of CSI-free operation to interference suppression, the work in \cite{ZO_conference_2025} introduced the zeroth-order (ZO) optimization method for blind interference suppression in IRS-aided systems, which replaces the standard gradient with a ZO gradient. Additionally, the authors of \cite{CS_conference_2025,CS_journal_2025} proposed using only the received signal power to estimate a transformed low-dimensional covariance matrix in place of the original channel covariance matrix, thereby more efficiently solving the interference neutralization problem.
However, these blind interference suppression schemes presented in \cite{ZO_conference_2025, ZO_magazine, CS_conference_2025,CS_journal_2025} all rely on the assumption of ideal continuous phase and amplitude control at the IRS. Under the practical IRS constraints of discrete phase shifts and constant modulus, the ZO scheme from \cite{ZO_conference_2025} may fail to converge \cite{ZO_magazine}, while the adaptive beamforming approach in \cite{CS_conference_2025,CS_journal_2025} suffers from significant performance degradation and often lacks a feasible solution.
To overcome these limitations, we propose a blind interference suppression scheme for IRS-aided systems under the practical constraints of discrete phase and constant modulus. The main contributions of this work are summarized as follows.

\begin{itemize}
\item We uncover the inherent potential of the IRS for interference suppression by deriving the relationship among the carrier wavelength, the IRS deployment distance, and the required number of IRS elements.
\item We introduce a novel blind interference suppression strategy that combines the proportional phase-inversion (PPI) algorithm with the CSM method. This approach configures the IRS using only the received signal power, without requiring any prior CSI or knowledge of channel distributions. Moreover, the proposed strategy accommodates practical IRS implementations with discrete and imperfect reflection coefficients, and scales efficiently to multi-user scenarios. The computational complexity of the scheme is also shown to be low.
\item We derive the theoretical performance of the proposed scheme and validate it through numerical simulations. Extensive simulations under various parameter settings further demonstrate that our scheme can suppress interference to the noise power level, thereby significantly improving the SINR and outperforming existing benchmark schemes under typical interference conditions.
\end{itemize}

The remainder of this paper is organized as follows. Section~\ref{sec:system_model} introduces the system model. Section~\ref{sec_Deployment_Settings} examines practical deployment considerations for IRS-enabled interference suppression.
Section~\ref{sec_Proposed_Scheme} details the proposed blind interference suppression strategy. Section~\ref{Sec_Performance_Analysis} provides a theoretical analysis of performance and complexity. Numerical results are discussed in Section~\ref{sec_numerical_simulation}, and Section~\ref{sec_conclusion} concludes the paper.

\textit{Notations}: 
Variables and column vectors are denoted by normal-face letters (e.g., $x$) and bold-face lower-case letters (e.g., $\mathbf{x}$), respectively. 
$\widehat{E}[x]$ represents the sampling mean of a variable. $\mathbb{E}[x]$ and $\mathrm{Var}[x]$ represent the expectation and variance of a variable, respectively.
$|x|$ and $\angle x$ extract the amplitude and phase of a variable, respectively. Lastly, $\mathcal{CN}(0, \sigma^2)$ represents the circularly symmetric complex Gaussian distribution. The most important variables and their physical meanings are listed in Table \ref{Main Symbols}.

The reproducible code for our numerical simulations is available at: \url{https://gitee.com/sssystaowang/blind_interference_suppression_2025}.

\begin{table}[t]
	\centering
	\caption{List of variables and their physical meanings}
	\label{Main Symbols}
    \renewcommand{\arraystretch}{1.5} 
	\begin{tabular}{|c|l|}
		\hline
		\textbf{Symbol} & \textbf{Physical Meaning} \\
		\hline
		$a_{S,m,n}$ & Binary selection indicator for $\mathrm{IRS}_n$ \\
		\hline
		$c$ & Constant defined as $c = \sqrt{\pi} \frac{\sin(\omega/2)}{\omega}$ \\
		\hline
		$g_0$ & Direct interference channel coefficient \\
		\hline
		$g_n$ & Equivalent cascade interference channel via $\mathrm{IRS}_n$ \\
		\hline
		$h_0$ & Direct legitimate channel coefficient \\
		\hline
		$h_n$ & Equivalent cascade legitimate channel via $\mathrm{IRS}_n$\\
		\hline
		$K$ & Number of discrete phase levels per IRS element \\
		\hline
		$M$ & Number of trials per proportion in PPI algorithm\\
		\hline
		$N$ & Number of IRS elements \\
		\hline
		$P_1$ & Average power of transmitted symbol $X$ from AP \\
		\hline
		$P_2$ & Average power of interfering symbol $Z$ \\
		\hline
		$s$ & Search step size for proportion $S$ in PPI algorithm \\
		\hline
		$S$ & Proportion of activated IRS elements \\
		\hline
		$T$ & Number of samples in CSM algorithm \\
		\hline
		$V$ & Noise \\
		\hline
		$W$ & Interference plus noise \\
		\hline
		$X$ & Transmitted symbol from AP \\
		\hline
		$Y$ & Received signal at the UE \\
		\hline
		$Z$ & Interfering symbol \\
		\hline
		$\theta_n$ & Phase shift introduced by $\mathrm{IRS}_n$ \\
		\hline
		$\boldsymbol{\theta}$ & IRS phase-shift vector $[\theta_1, \theta_2, \dots, \theta_N]$ \\
		\hline
		$\boldsymbol{\theta}^\mathrm{CSM}$ & IRS configuration obtained from CSM algorithm \\
		\hline
		$\boldsymbol{\theta}_{S,m}$ & Phase configuration for trial $m$ and proportion $S$ \\
		\hline
		$\sigma^2$ & Noise power \\
		\hline
		$\rho$ & Standard deviation of reflection channel coefficients \\
		\hline
		$\phi_n$ & Phase of reflected interference channel component \\
		\hline
		$\Phi_K$ & Discrete phase set $\{\omega, 2\omega, \ldots, K\omega\}$ \\
		\hline
		$\omega$ & Phase interval $\omega = 2\pi/K$ \\
		\hline
	\end{tabular}
\end{table}
\section{System Model}
\label{sec:system_model}
\begin{figure}[t]
\centering
\includegraphics[width=0.4\textwidth]{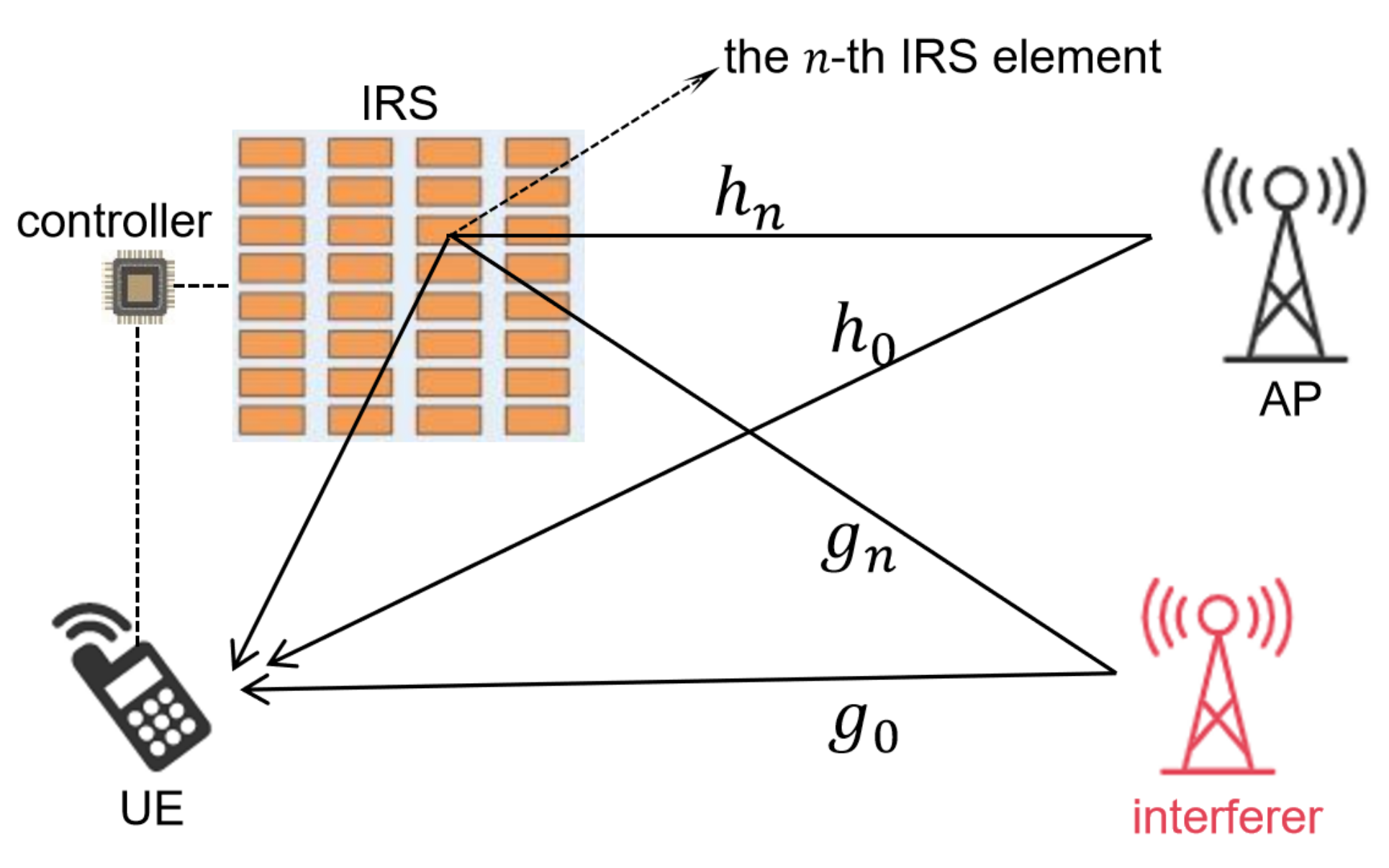}
\caption{An IRS-aided wireless communication system with interference.}
\label{fig-system-model}
\end{figure}

As illustrated in Fig.~\ref{fig-system-model}, we consider a communication system where an IRS composed of $N$ passive reflecting elements is deployed near a user equipment (UE)\footnote{The proposed scheme can be extended to multi-UE scenarios, which is elaborated in Remark \ref{remark_multi_UE} and validated in the simulations of Section \ref{sec_numerical_simulation}.} to enhance transmission quality in the presence of an interferer\footnote{The interference discussed in this paper refers to disturbances from uncontrollable sources, such as non-cooperative devices or natural background radiation, as opposed to controllable multi-user or inter-cell interference.}. The IRS is controlled by the UE\footnote{In the proposed scheme, the IRS adjusts its configuration solely based on commands from the UE, without providing feedback to either the UE or AP. This simplified architecture facilitates seamless integration with existing networks \cite{CSM_2023}.}. Let $\mathcal{N} = \{1, 2, ..., N\}$ index the IRS elements, with $\mathrm{IRS}_n$ denoting the $n$-th element. The received signal at the UE can be expressed as:

\begin{equation}
\label{eq:received_signal}
Y = \left(h_{0} + \sum_{n=1}^{N} h_{n} e^{j \theta_{n}}\right) X + \left(g_{0} + \sum_{n=1}^{N} g_{n} e^{j \theta_{n}}\right) Z + V,
\end{equation}
where $X$ is the transmitted symbol from the access point (AP) with average power $P_{1}$, i.e., $\mathbb{E}\left[|X|^{2}\right] = P_{1}$. $h_{0}$ denotes the direct channel coefficient from the AP to the UE, and $h_{n}$ denotes the equivalent cascade channel via $\mathrm{IRS}_n$ (AP-$\mathrm{IRS}_n$-UE). 
Similarly, $Z$ is the interfering symbol with average power $P_{2}$, i.e., $\mathbb{E}\left[|Z|^{2}\right] = P_{2}$. $g_{0}$ is the direct interference channel coefficient, and $g_{n}$ denotes the equivalent cascade interference channel (interferer–$\mathrm{IRS}_n$–UE). The phase shift introduced by $\mathrm{IRS}_n$ is denoted by $\theta_{n} \in [0, 2\pi)$, and $V\sim \mathcal{CN}\left(0, \sigma^{2}\right)$ represents additive white Gaussian noise (AWGN).
In accordance with typical IRS hardware constraints, each phase shift $\theta_{n}$ is selected from a discrete set:
\begin{equation}
\label{eq:phase_set}
\theta_n \in \Phi_{K} = \{\omega, 2\omega, \ldots, K\omega\}, \text{~where} \quad \omega = \frac{2\pi}{K},
\end{equation}
with $K$ being the number of discrete phase levels available per element.

When the AP is silent, the received signal contains only the interference plus noise components. Denoting this signal as $W$, we have:
\begin{equation}
\label{eq:interference_signal_system_model}
W = \left(g_{0} + \sum_{n=1}^{N} g_{n} e^{j \theta_{n}}\right) Z + V.
\end{equation}
The achievable SINR at the UE during data transmission is therefore given by:
\begin{equation}
\begin{split}
\mathrm{SINR} &= \frac{\mathbb{E}\left[\left| Y - W \right|^{2}\right]}{\mathbb{E}\left[|W|^{2}\right]} \\
    &= \frac{P_{1} \left| h_{0} + \sum_{n=1}^{N} h_{n} e^{j\theta_{n}} \right|^{2}}{P_{2} \left| g_{0} + \sum_{n=1}^{N} g_{n} e^{j\theta_{n}} \right|^{2} + \sigma^{2}}.
\end{split}
\label{eq:sinr}
\end{equation}
The objective of the proposed scheme is to determine the IRS phase-shift vector \( \boldsymbol{\theta} = [\theta_1, \theta_2, \dots, \theta_N] \) so as to minimize the interference power \( \mathbb{E}\left[|W|^{2}\right] \), without requiring any CSI.

\section{Deployment Settings for IRS-enabled Interference Suppression}
\label{sec_Deployment_Settings}
This section examines the inherent potential of the IRS for interference suppression by analyzing the relationship between the IRS deployment distance and the number of IRS elements required for effective interference suppression.

From (\ref{eq:sinr}), to eliminate the interference power, the condition $\sum_{n=1}^{N} g_{n} e^{j\theta_{n}} = -g_{0}$ must be satisfied. Under the ideal assumption of continuous IRS phase shifts, each reflection coefficient $\theta_n$ can be configured such that every reflecting channel aligns with $-g_0$, i.e., $\theta_n = \angle(-g_0)-\angle(g_n)$. In this case, achieving perfect interference suppression requires the sum of the reflection channel amplitudes to equal the direct channel amplitude: $\sum_{n=1}^{N} |g_{n}| = |g_{0}|$. Considering practical constraints of discrete phase control and generally suboptimal optimization, a more robust condition for effective interference suppression is
\begin{equation}
\label{eq:gn>go}
\sum_{n=1}^{N} |g_{n}| \ge |g_{0}|.
\end{equation} 
Based on (\ref{eq:gn>go}), the following analysis derives the relationship between the deployment distance and the requisite number of IRS elements under practical channel assumptions.

The channel for the interferer–$\mathrm{IRS}_n$–UE link can be modeled as
\begin{equation}
\label{eq:gn}
g_n = g_{n,1} \cdot \sqrt{A_{n, \rm{IRS}}}  \cdot g_{n,2} \cdot \sqrt{A_{n, \rm{UE}}},
\end{equation}
where where $g_{n,1}$ denotes the path loss from the interferer to $\mathrm{IRS}_n$, and $g_{n,2}$ denotes the path loss from $\mathrm{IRS}_n$ to the UE. The terms $A_{n, \rm{IRS}}$ and $A_{n, \rm{UE}}$ represent the effective apertures of $\mathrm{IRS}_n$ and the UE, respectively, corresponding to the cascaded channel $g_n$. Similarly, the direct interferer–UE channel is
\begin{equation}
\label{eq:g0}
g_0 = g_{0,1} \cdot \sqrt{A_{0, \rm{UE}}},
\end{equation}
where $g_{n,1}$ is the path loss from the interferer to the UE, and $A_{0, \rm{UE}}$ is the UE's effective aperture for the direct path $g_0$.

We consider a typical deployment where the IRS is placed near the UE, while the interferer is located at a significantly greater distance. Accordingly, the following assumptions are made:
\begin{enumerate}
    \item The path losses $g_{n,1}, n\in \mathcal{N}$ and $g_{0,1}$ follow an identical distribution.
    \item The channel$g_{n,2}$ is dominated by its line-of-sight (LoS) component, approximated as: \[|g_{n,2}| \approx \frac{1}{\sqrt{4\pi d^2}},\] where $d$ is the $\mathrm{IRS}$–UE distance.
    \item Without loss of generality, we assume $A_{n, \rm{UE}} = A_{0, \rm{UE}}, n\in \mathcal{N}$. Given that typical IRS elements are of sub-wavelength size, we set $A_{n, \rm{IRS}} = \lambda^2/4$.
\end{enumerate}

Applying (\ref{eq:gn}), (\ref{eq:g0}) and the above assumptions yields 
\begin{equation}
\label{eq:g0_gn}
\begin{split}
\frac{\widehat{E}[|g_0|]}{\widehat{E}[|g_n|]} &= \frac{\widehat{E}[|g_{0,1}|] \cdot \sqrt{A_{0, \rm{UE}}}}{\widehat{E}[|g_{n,1}|] \cdot \sqrt{A_{n, \rm{IRS}}}  \cdot \widehat{E}[|g_{n,2}|] \cdot \sqrt{A_{n, \rm{UE}}}}\\
&= \frac{1}{\sqrt{A_{n, \rm{IRS}}}  \cdot \widehat{E}[|g_{n,2}|]}\\
&=\frac{4\sqrt{\pi} d}{\lambda}.   
\end{split}
\end{equation}
This indicates that, on average, the magnitude $|g_0|$ is $\frac{4\sqrt{\pi} d}{\lambda}$ times larger than $|g_n|$. Combining this result with condition (\ref{eq:gn>go}), the minimum number of IRS elements required is
\begin{equation}
\label{eq:N_min}
\begin{split}
N_{\rm{min}} = \frac{4\sqrt{\pi} d}{\lambda},
\end{split}
\end{equation}
which is directly proportional to the IRS-UE distance $d$ and inversely proportional to the carrier wavelength $\lambda$. 

\begin{figure}[t]
\centering
\includegraphics[width=0.5\textwidth]{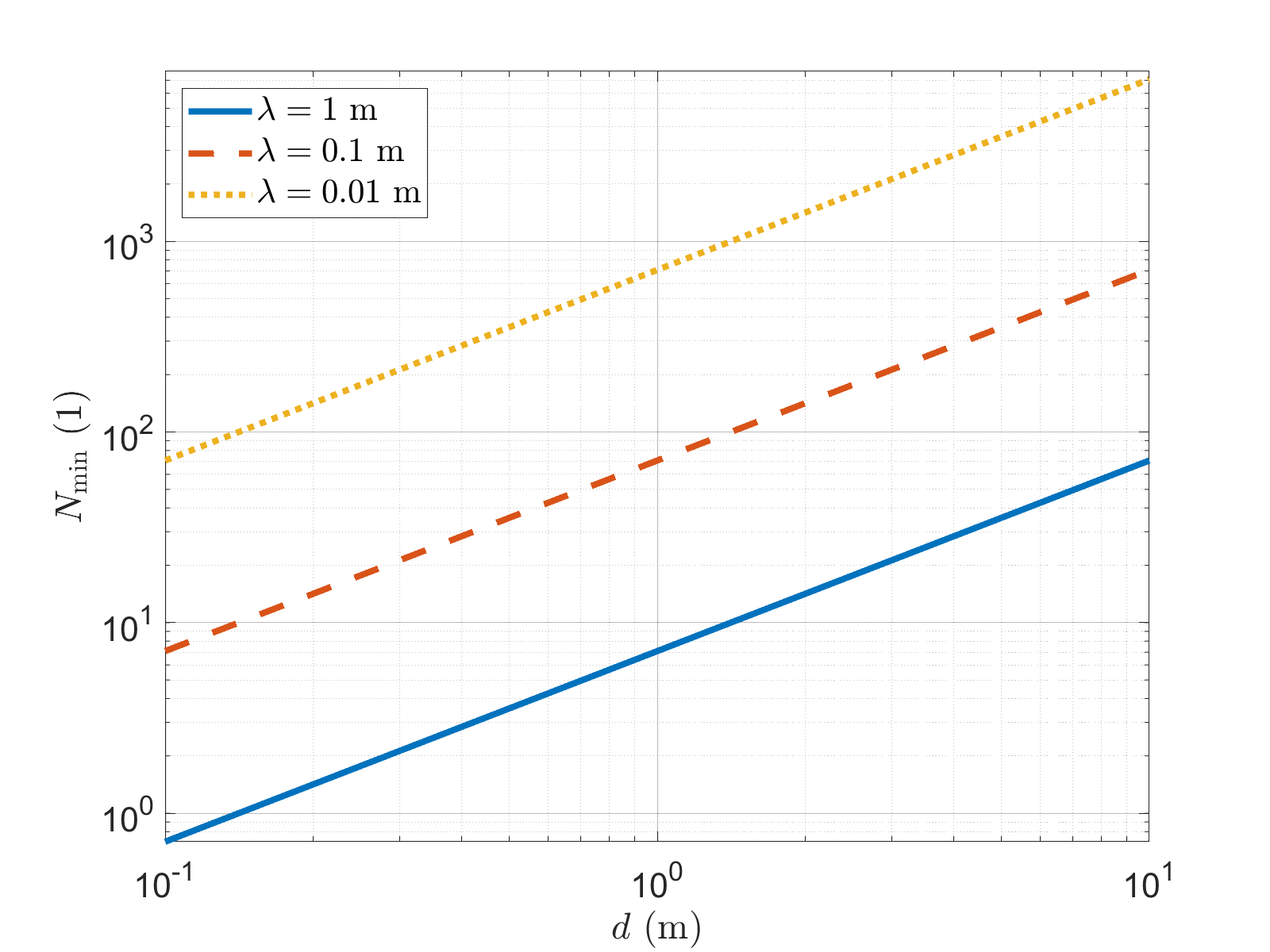}
\caption{Minimum required number of IRS elements vs. IRS-UE distance, under various carrier frequencies.}
\label{fig-Amplitude_Ratio}
\end{figure}

Fig. \ref{fig-Amplitude_Ratio} plots plots $N_{\rm{min}}$ versus $d$ for different values of the carrier wavelength $\lambda$. It is shown that for $d = 0.1$ m, $N_{\rm{min}}$ is approximately 0.7, 7, and 70 for the 0.3 GHz, 3 GHz, and 30 GHz bands, respectively. While these numbers are small, such a near-field deployment may induce model mismatch with conventional channel modeling. 
By contrast, for $d = 10$ m, $N_{\rm{min}}$ is approximately 70, 700, and 7000 for the respective frequency bands, which imposes prohibitive hardware and operational costs. A more practical distance is $d = 1$ m, where $N_{\rm{min}}$ is approximately 7, 70, and 700. This distance avoids significant near-field effects while maintaining a feasible hardware scale. In summary, the relationship in (\ref{eq:g0_gn}) provides essential guidance for the deployment of IRS-based interference suppression systems.

The preceding analysis assumes perfect CSI knowledge and continuous IRS phase shifts. The following section proposes a practical blind interference suppression scheme that operates without CSI and under discrete phase-shift constraints.

\section{Proposed Blind Interference Suppression Scheme}
\label{sec_Proposed_Scheme}
This section describes the overall procedure and key algorithms of the proposed blind interference suppression scheme. The core idea is to determine the IRS phase configuration such that the reflected interference channels combine destructively with the direct interference channel at the receiver, thereby suppressing the overall interference power.

\subsection{Overall Procedure}
\begin{figure}[ht]
\centering
\includegraphics[width=0.35\textwidth]{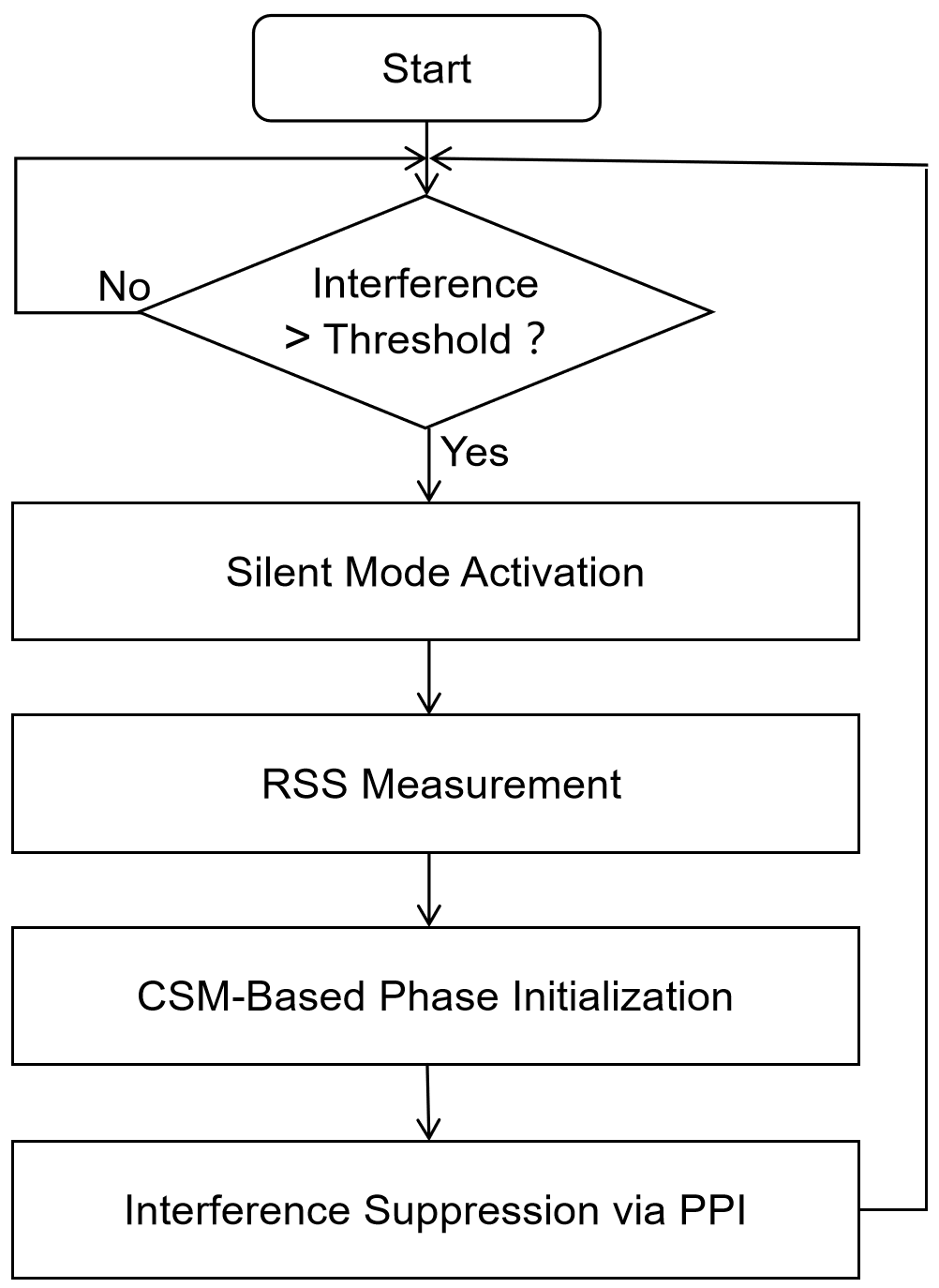}
\caption{Flowchart of the proposed blind interference suppression scheme.}
\label{fig:flowchart}
\end{figure}

The overall flowchart of the proposed scheme is illustrated in Fig.~\ref{fig:flowchart}, which outlines a cyclic process of interference detection, measurement, and suppression. The scheme operates through the following four main steps:
\begin{enumerate}
    \item \textbf{Silent Mode Activation}: During normal communication, the AP temporarily enters a silent mode upon request from the UE when the measured interference power exceeds a predefined threshold, which is also assumed in prior works \cite{ZO_conference_2025, CS_conference_2025}. Alternatively, silent periods can be introduced during zero-power symbol intervals in pilot sequences\cite{3GPP.38.214,3GPP.38.331}\footnote{To counter proactive interferers that transmit only when sensing legitimate signals, the AP can insert zero-power symbols in its pilot sequence. This ensures that the interferer remains active during the AP's silent periods, allowing interference measurement.}.
    \item \textbf{RSS Measurement}: While the AP is silent, the UE configures the IRS with random phase-shift vectors over $T$ independent trials. Denote the phase configuration in the $t$-th trial by $\boldsymbol{\theta}_t = \{\theta_{1t}, \theta_{2t}, \dots, \theta_{Nt}\}$, where $t = 1, 2, \dots, T$. The corresponding received signal strength (RSS), equivalent to the interference-plus-noise power $|W_t|^2$, is measured at the UE.
    \item \textbf{CSM-Based Phase Initialization}: Using the collected dataset $\{[\boldsymbol{\theta}_t, |W_t|^2]\},  t = 1, 2, \dots, T$, the CSM algorithm is applied to compute an initial phase configuration $\boldsymbol{\theta}^\mathrm{CSM}$. This configuration minimizes the phase misalignment between each IRS-reflected interference channel and the direct interference channel.
    \item \textbf{Interference Suppression via PPI}: Based on $\boldsymbol{\theta}^\mathrm{CSM}$, the PPI algorithm is employed to determine the final IRS configuration that achieves effective interference suppression.
\end{enumerate}

The process is repeated periodically to adapt to dynamic interference conditions. Unlike existing approaches, the proposed scheme operates without any CSI and does not assume any specific channel model (e.g., Rician, Rayleigh, or sparse multipath), thereby improving robustness in practical scenarios. The two core components—the CSM algorithm and the PPI algorithm—are detailed in the following subsections.

\subsection{CSM Algorithm}
The CSM algorithm is employed to obtain an initial IRS phase configuration $\boldsymbol{\theta}^\mathrm{CSM}$ that aligns the IRS-reflected interference channels with the direct interference channel. This method utilizes conditional sample means of the measured interference power for each discrete phase shift of every IRS element \cite{CSM_2023}. The detailed procedure is as follows:

\begin{enumerate}
\item \textbf{Calculate Conditional Sample Mean}: Using the collected dataset $\{[\boldsymbol{\theta}_t, |W_t|^2]\},  t = 1, 2, \dots, T$, compute for each IRS element $\mathrm{IRS}_n$ and each discrete phase $\varphi \in \Phi_K$ the conditional sample mean of $|W|^2$ given that $\theta_n = \varphi$:
\begin{equation}
   \widehat{E}[|W|^2 \mid \theta_n = \varphi] = \frac{1}{|\mathcal{T}_{n,\varphi}|} \sum_{t \in \mathcal{T}_{n,\varphi}} |W_t|^2,    
\end{equation}
where \( \mathcal{T}_{n,\varphi} = \{ t : \theta_{nt} = \varphi \} \) denotes the set of trial indices in which the phase of $\mathrm{IRS}_n$ equals \( \varphi \).

\item \textbf{Obtain $\boldsymbol{\theta}^\mathrm{CSM}$}: For each element $\mathrm{IRS}_n$, select the phase value that maximizes the corresponding conditional sample mean:
\begin{equation}
   \theta_n^{\text{CSM}} = \arg \max_{\varphi \in \Phi_K} \widehat{E}[|W|^2 \mid \theta_n = \varphi].    
\end{equation}
The resulting IRS configuration is given by:   
\begin{equation}
\label{xita-CSM}
    \boldsymbol{\theta}^\mathrm{CSM} = [\theta_1^{\text{CSM}}, \theta_2^{\text{CSM}}, \dots, \theta_N^{\text{CSM}}].
\end{equation}
\end{enumerate}

As established in \cite{CSM_2023}, the configuration \( \boldsymbol{\theta}^\mathrm{CSM} \) obtained from the CSM algorithm minimizes the phase difference between each reflected interference channel component \( g_{n} e^{j\theta_n^{\mathrm{CSM}}}, n\in\mathcal{N} \) and the direct interference channel $g_{0}$. Consequently, the phases of \( g_{n} e^{j\theta_n^{\mathrm{CSM}}}, n\in\mathcal{N} \) are distributed within an interval of \( \pm \frac{\omega}{2} \) around the phase of \( g_{0} \), as illustrated in Fig. \ref{fig:phase_CSM}.

\begin{figure}[h]
\centering
\includegraphics[width=0.4\textwidth]{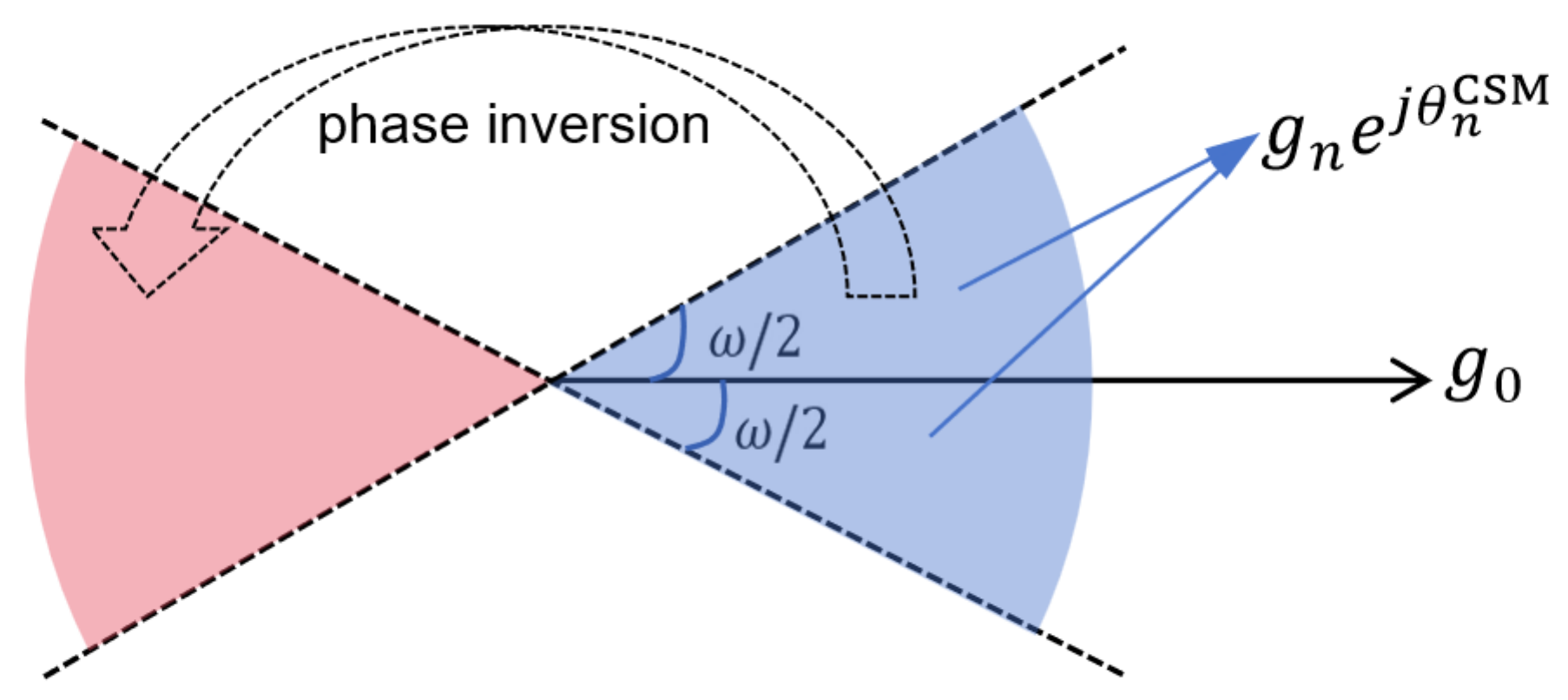}
\caption{Under the phase configuration of \( \boldsymbol{\theta}^{\text{CSM}} \), the phases of the reflected interference channel components are confined within \( \pm \frac{\omega}{2} \) of the direct channel phase, enabling subsequent interference suppression via phase inversion.}
\label{fig:phase_CSM}
\end{figure}

\subsection{PPI Algorithm}
\label{Phase_Inversion_algorithm}
Based on the initial configuration \( \boldsymbol{\theta}^{\text{CSM}} \) obtained through the CSM algorithm, the PPI algorithm is employed to achieve interference suppression. The core concept of PPI is to select a subset of IRS elements, invert the phases of these elements by adding $\pi$, and deactivate the remaining elements, thus creating a destructive superposition of the reflected channels and the direct channel. This process is vividly shown in Fig. \ref{fig:phase_CSM}. To this end, the PPI algorithm aims to search for the proportion $S\in[0,1]$ of activated elements and their specific combination that minimizes interference power. The detailed procedure is described below.
\begin{enumerate}
\item \textbf{Define Trial Parameters}: Define the set of candidate proportions as $\{s, 2s, \dots, 1\}$, where $s$ is the search step size. For each candidate proportion, conduct $M$ independent trials. Therefore, the total number of trials is \(\frac{M}{s} \), which remains manageable (e.g., 1000 trials for $s$=0.01 and $M$=10).
\item \textbf{Perform Trials}: For \( S \in \{s, 2s, \dots, 1\}\):
    \begin{itemize}
         \item For each trial \( m \in \{1, 2, \dots, M\} \):
        \begin{itemize}
             \item Randomly select \( \mathrm{round}(S N) \) IRS elements form the total $N$ elements. Define the binary selection indicators $\mathbf{a}_{S,m} = [a_{S,m,1}, a_{S,m,2}, ..., a_{S,m, N}]$, where
                \begin{equation}
                a_{S,m,n}=\left\{\begin{array}{cc}
                1, & \text{if~} \mathrm{IRS}_{n} \text{ is selected} \\
                0, & \text{else}
                \end{array}\right..  
                \end{equation}
             \item Activate only the selected IRS elements while turning off the others\footnote{Deactivating a portion of IRS elements reduces power consumption while maintaining interference suppression performance.}. Invert the phases of the selected elements, The resulting phase configuration for proportion $S$ and trial $m$ is given by $\boldsymbol{\theta}_{S,m} = [\theta_{S,m,1}, \theta_{S,m,2}, \dots, \theta_{S,m,N}]$, where
                \begin{equation}
                \theta_{S,m,n}=\left\{\begin{array}{cc}
                \theta_{n}^{\mathrm{CSM}}+\pi, & \text{if } a_{S,m,n}=1 \\
                \text{Deactivated}, & \text{else}
                \end{array}\right.. 
                \end{equation}
            \item The received signal for proportion $S$ and trial $m$ is
                \begin{equation}
                \label{eq:interference_signal}
                W_{S,m} = \left(g_{0} + \sum_{n=1}^{N} a_{S,m,n}g_{n} e^{j \theta_{S,m,n}}\right) Z + V.
                \end{equation}
            \item Measure the corresponding interference-plus-noise power \( |W_{S,m}|^2 \).
        \end{itemize}
    \end{itemize}
    \item \textbf{Obtain Final IRS Configuration}: After completing all trials, determine the optimal IRS configuration that minimizes the measured interference power:
        \begin{equation}
            \begin{split}
        \mathbf{a}^{\text{final}},\boldsymbol{\theta}^{\text{final}} = \arg \min_{\mathbf{a}_{S,m},~\boldsymbol{\theta}_{S,m}} |W_{S,m}|^2.        
            \end{split}
        \end{equation}  
\end{enumerate}

In summary, the proposed scheme combines CSM-based phase alignment with PPI to achieve effective interference suppression, relying solely on the received signal power without requiring any CSI.

\newtheorem{remark}{Remark}
\begin{remark}[Extension to multi-UE Scenarios]
\label{remark_multi_UE}
\rm{
The proposed scheme can be extended to multi-UE scenarios where each UE is assisted by a dedicated IRS, with each UE independently executing the blind interference suppression scheme. For a given UE, interference signals reflected by other IRSs may introduce deviations in the measured received power. Nevertheless, since the distance between an IRS and its served UE is typically much shorter than its distance to other UEs, the interference contributed by non-serving IRSs remains relatively weak and does not substantially degrade the achievable performance. The effectiveness of the proposed scheme in multi-UE scenarios is validated through simulation results presented in Fig. \ref{fig_SINR_interUEdist} in Section \ref{sec_numerical_simulation}.
}
\end{remark}

\begin{remark}[Robustness Against Imperfect IRS Reflection Coefficients]
\label{remark_error}
\rm{Practical IRS implementations exhibit non-ideal reflection coefficients due to hardware limitations such as insertion loss, quantization errors, mutual coupling, and thermal drift. The proposed scheme demonstrates inherent robustness against such imperfections. Specifically, the CSM algorithm determines the optimal IRS configuration based solely on the conditional sampling mean of the received power, without presupposing any particular model for the IRS reflection coefficients. Similarly, the PPI algorithm performs an exhaustive search over inversion ratios and randomly samples IRS element combinations for a given ratio, without requiring prior knowledge of the IRS coefficient characteristics. Consequently, the proposed approach remains effective despite unpredictable fluctuations in the reflection coefficients induced by hardware non-idealities. This robustness is validated through simulation results in Fig. \ref{fig_SINR_E} in Section \ref{sec_numerical_simulation}.
}
\end{remark}

\section{Performance Analysis}
\label{Sec_Performance_Analysis}
This section presents a theoretical analysis of the proposed scheme's performance, focusing on the Interference Power Change Ratio (IPCR) and the Legitimate Signal Power Change Ratio (LSPCR). Specifically, IPCR is defined as the ratio of the optimized interference signal power to the original interference signal power, which quantifies the relative change in the interference power resulting from optimization. Similarly, LSPCR is defined as the ratio of the optimized legitimate signal power to the original legitimate signal power. In addition, the complexity of the proposed scheme is also analyzed.

Without loss of generality, we take the phase of $g_{0}$ as the reference (i.e.,  $\angle g_{0} = 0$). The reflection channel coefficients $g_{n}$ are assumed to be independent and identically distributed as $g_{n}\sim\mathcal{CN}(0,\rho^2)$\footnote{Other distributions can be similarly analyzed with minor modifications.}. Considering the marginal channel gain contributed by a single IRS element, we assume that $g_0 \gg \rho$. After conducting the CSM and PPI algorithm, the phases of the selected reflection channels, denoted by $\phi_n = \mathrm{\angle}(g_{n} e^{j\theta_n^{\rm{CSM}+\pi}})$ for each IRS element, are assumed to be uniformly distributed over $[\pi- \omega/2, \pi + \omega/2)$, as illustrated in Fig.~\ref{fig:phase_CSM}. For analytical simplicity, we normalize $P_1 = P_2 = 1$, effectively omitting $X$ and $Z$ from the expressions. The selection indicator $a_{S,m,n}$ is simplified to $a_{n}$ in the analysis.

\subsection{Interference Power Change Ratio (IPCR)}
\label{Expected_Power_of_W}

\begin{proposition}
Given proportion $S\in[0,1]$ and a random combination of activated elements, the expected interference-plus-noise power is given by:
\begin{equation}
\begin{split}
\mathbb{E}[|W|^2] = A S^2 + B S + C.
\end{split}
\label{power_of_W_final}
\end{equation}
where $A =  \rho^2c^2 N^2,~B = N [(1 - c^2) \rho^2 - 2 g_{0} c \rho],~C = g_{0}^2 + \sigma^2$, $c = \sqrt{\pi} \frac{\sin(\omega/2)}{\omega}$. 
\end{proposition}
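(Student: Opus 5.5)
The plan is to expand $|W|^2$ directly under the modelling assumptions stated just before the proposition, and then take expectations term by term. With $\angle g_0=0$ and $P_1=P_2=1$, the selected reflected components after CSM and inversion can be written as $g_n e^{j(\theta_n^{\rm CSM}+\pi)} = |g_n| e^{j\phi_n}$. Here $|g_n|$ is Rayleigh distributed, since $g_n\sim\mathcal{CN}(0,\rho^2)$, and $\phi_n$ is uniform on $[\pi-\omega/2,\pi+\omega/2)$. I will treat $|g_n|$ and $\phi_n$ as independent of each other and across $n$. I will also treat the noise $V$ as zero-mean and independent of everything else, and regard the number of activated elements as $SN$, ignoring the rounding. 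Then
\[
W = g_0 + \sum_{n=1}^{N} a_n |g_n| e^{j\phi_n} + V, \qquad \sum_{n} a_n = SN .
\]

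The first step is to compute the two moments of a single reflected term. For the second moment, $\mathbb{E}\big[|g_n|^2\big]=\rho^2$. For the first moment, the Rayleigh mean is $\mathbb{E}[|g_n|]=\sqrt{\pi}\rho/2$, and averaging the phase over the interval gives
\[
\mathbb{E}\big[e^{j\phi_n}\big]=\frac{1}{\omega}\int_{\pi-\omega/2}^{\pi+\omega/2} e^{j\phi}\,d\phi=-\frac{2\sin(\omega/2)}{\omega}.
\]
Hence $\mathbb{E}\big[|g_n|e^{j\phi_n}\big] = -\sqrt{\pi}\rho\,\frac{\sin(\omega/2)}{\omega} = -c\rho$, which is real and negative. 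This is exactly where the constant $c$ in the statement comes from.

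The second step is to expand $\mathbb{E}[|W|^2]$. The noise cross terms vanish because $V$ is zero-mean and independent, so the noise contributes only $\sigma^2$. The remaining expansion gives four groups of terms:
\begin{itemize}
\item the constant term $g_0^2$;
\item the linear cross term $2g_0\,\mathrm{Re}\sum_n a_n\mathbb{E}\big[|g_n|e^{j\phi_n}\big] = -2g_0 c\rho\, SN$;
\item the diagonal terms $\sum_n a_n\mathbb{E}\big[|g_n|^2\big]=SN\rho^2$;
\item the off-diagonal terms, which by independence equal $\sum_{n\neq m}a_na_m (c\rho)^2 = SN(SN-1)c^2\rho^2$.
\end{itemize}
Collecting powers of $S$ gives $c^2\rho^2N^2S^2 + N\big[(1-c^2)\rho^2-2g_0c\rho\big]S + g_0^2+\sigma^2$, which is the claimed $AS^2+BS+C$.

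The main obstacle is justifying the distributional model rather than the algebra. After CSM, $\phi_n$ depends on $\angle g_n$ through the choice of $\theta_n^{\rm CSM}$, so the independence of $|g_n|$ and $\phi_n$ is not automatic. I will argue it as follows. For $g_n$ circularly symmetric, $|g_n|$ and $\angle g_n$ are independent. The CSM choice, which the paper treats as the ideal alignment from \cite{CSM_2023}, depends only on $\angle g_n$. Consequently the residual phase $\phi_n$, which is $\angle g_n$ reduced modulo $\omega$ and then shifted by $\pi$, is independent of $|g_n|$ and uniform on the stated interval. Independence across $n$ then follows from the i.i.d. assumption on the $g_n$. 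The random choice of the activated subset only fixes the count $SN$, so the expectation does not depend on which elements are chosen.
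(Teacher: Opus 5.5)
Your proposal is correct and follows essentially the same route as the paper. It rests on the same independence model, the same single-element moments $\mathbb{E}[|g_n|e^{j\phi_n}]=-c\rho$ and $\mathbb{E}[|g_n|^2]=\rho^2$, and the same assembly: the paper writes $\mathbb{E}[|W|^2]=\mathrm{Var}[W]+|\mathbb{E}[W]|^2$ with $\mathbb{E}[W]=g_0-SN\rho c$ and $\mathrm{Var}[W]=SN\rho^2(1-c^2)+\sigma^2$, which is just a regrouping of your diagonal and off-diagonal terms. Your explicit argument that, under ideal CSM, $\phi_n$ is uniform on $[\pi-\omega/2,\pi+\omega/2)$ and independent of $|g_n|$ is a useful addition, since the paper only assumes this.
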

\begin{proof}
    See Appendix.
\end{proof}

We minimize the quadratic function $E(S) = A S^2 + B S + C$ over \(S \in [0, 1]\). The optimum occurs at:
\begin{equation}
    \begin{split}
S^* = -\frac{B}{2A} = \frac{2 g_{0} c + (c^2 - 1)\rho}{2 \rho c^2 N}.      
    \end{split}
\end{equation}
The expression of IPCR is characterized by three cases:

\begin{itemize}
\item \textbf{Case 1: $S^* < 0$}. The minimum is achieved at $S=0$, yielding: 
\begin{equation}
    \begin{split}
    \mathbb{E}[|W|^2]_{\text{min}} = g_{0}^2 + \sigma^2.    
    \end{split}
\end{equation}
This case occurs when $S^* < 0$, i.e., $\frac{2 g_{0} c + (c^2 - 1)\rho}{2 \rho c^2 N} < 0$, thus we have
\begin{equation}
    \begin{split}    
g_{0} < \frac{1-c^2}{2c}\rho.
    \end{split}
\end{equation}
Since $c = \sqrt{\pi} \frac{\sin(\omega/2)}{\omega}, \omega\in(0, \pi]$ monotonically decreases with $\omega$, $\frac{1-c^2}{2c}$ monotonically decreases with $c$, we have $\frac{1-c^2}{2c}$ monotonically increases with $\omega$. Through calculation, we have $0.12<\frac{1-c^2}{2c}\le 0.60$ for $\omega\in(0, \pi]$.
This indicates that $g_{0}$ is smaller than the expected amplitude of a single reflection channel, i.e., $\mathbb{E}[|g_{n}|] = \frac{\sqrt{\pi}}{2} \rho \approx 0.89 \rho$ (see Appendix). Therefore, $g_{0}$ can be regarded as negligible in this case, where no IRS activation is needed. Thus, $\mathrm{IPCR} = 1$ in this case. 

\item \textbf{Case 2: $S^* > 1$}. The minimum occurs at $S = 1$, giving:
\begin{equation}
    \begin{split}
       \mathbb{E}[|W|^2]_{\text{min}} &= A+B+C\\
       &=g_{0}^2 - 2 g_{0} N c\rho + N^2 c^2\rho^2 \\&~~~ -N c^2\rho^2  + N\rho^2 + \sigma^2 \\&= (g_{0}-Nc\rho)^2+N\rho^2(1-c^2)+\sigma^2.
       \label{power_not_engough_IRS_elements}
    \end{split}
\end{equation}
With equations (\ref{power_not_engough_IRS_elements}), the IPCR is:
\begin{equation}
    \begin{split}
\mathrm{IPCR} &=\frac{(g_{0}-Nc\rho)^2+N\rho^2(1-c^2)}{g_{0}^2}.
\label{IPCR_boost_insufficient}
\end{split}
\end{equation}
This case occurs when $S^* >1$, i.e., $\frac{2 g_{0} c + (c^2 - 1)\rho}{2 \rho c^2 N} > 1$, thus we have 
\begin{equation}
    \begin{split}
N< \frac{2 g_{0} c + (c^2 - 1)\rho}{2 \rho c^2}  = \frac{g_0}{c\rho}+\frac{c^2-1}{2c^2}. 
  \label{not_enough_IRS_elements}
    \end{split}
\end{equation}
This indicates that when the deployed IRS elements are insufficient, as shown in (\ref{not_enough_IRS_elements}), even with all IRS elements activated, i.e., $S = 1$, $g_{0}$ cannot be fully canceled. In this case, the number of activated IRS elements is $N$. 

\item \textbf{Case 3: $0 \leq S^* \leq 1$}. The minimum occurs at $S = S^*$, yielding:
\begin{equation}
    \begin{split}
\mathbb{E}[|W|^2]_{\text{min}} &= C-\frac{B^2}{4A} 
\\&= g_{0}^2 + \sigma^2 - \frac{\left((1 - c^2)\rho - 2 g_{0} c\right)^2}{4c^2} 
\\&= \frac{1 - c^{2}}{c}\rho g_{0} - \left(\frac{1 - c^{2}}{2c}\right)^2\rho^2 + \sigma^{2}. 
    \end{split}
\label{expected_interference_plus_noise_power}
\end{equation}
With equations (\ref{expected_interference_plus_noise_power}), the IPCR is:
\begin{equation}
    \begin{split}
\mathrm{IPCR}&=\frac{\frac{1 - c^{2}}{c}\rho g_{0} - \left(\frac{1 - c^{2}}{2c}\right)^2\rho^2}{g_{0}^2}\\
&= \frac{1 - c^{2}}{c}\frac{\rho}{g_{0}}\left(1- \frac{1 - c^{2}}{4c}\frac{\rho}{g_0}\right).
\label{SINR_boost}
\end{split}
\end{equation}
Since $S = \frac{2 g_{0} c + (c^2 - 1)\rho}{2 \rho c^2 N}$, we have 
\begin{equation}
    \begin{split}
  N = \frac{1}{S}(\frac{g_0}{c\rho}+\frac{c^2-1}{2c^2}) \ge \frac{g_0}{c\rho}+\frac{c^2-1}{2c^2}. 
  \label{amplitude_relationship}
    \end{split}
\end{equation}
This condition indicates that deploying enough IRS elements to satisfy (\ref{amplitude_relationship}) is essential for effective interference suppression. In this case, the number of activated IRS elements is $SN = \frac{g_0}{c\rho}+\frac{c^2-1}{2c^2}$. Combining equation (\ref{SINR_boost}), it can be seen that both IPCR and the number of activated IRS elements do not vary with $N$. 
\end{itemize}
\begin{figure*}[t]
    \centering
    \subfigure[$K = 2,~T = 200$.]
    {\includegraphics[scale = 0.131]{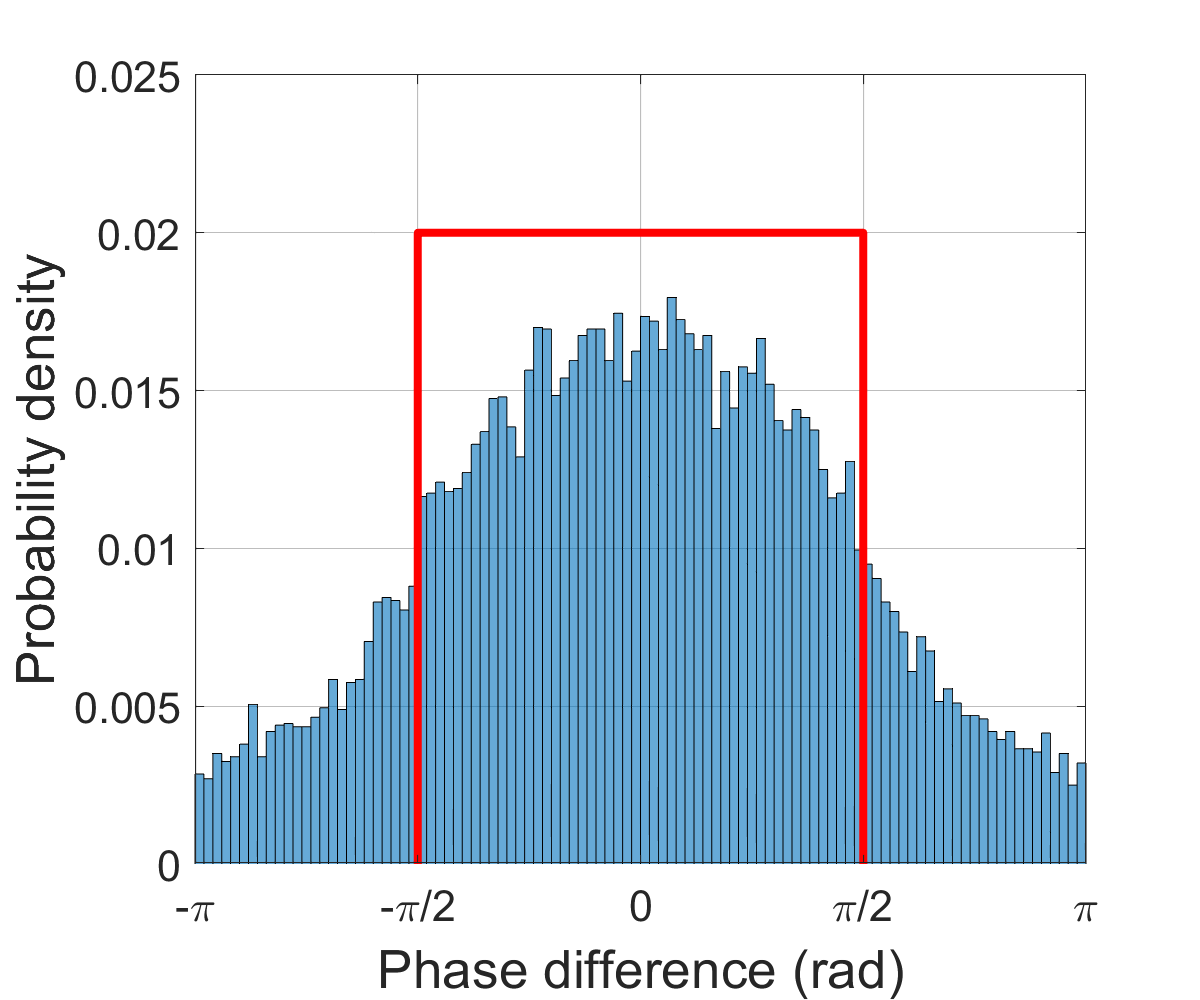}
    }     
     \subfigure[$K = 2,~T = 2000$.]
    {\includegraphics[scale = 0.131]{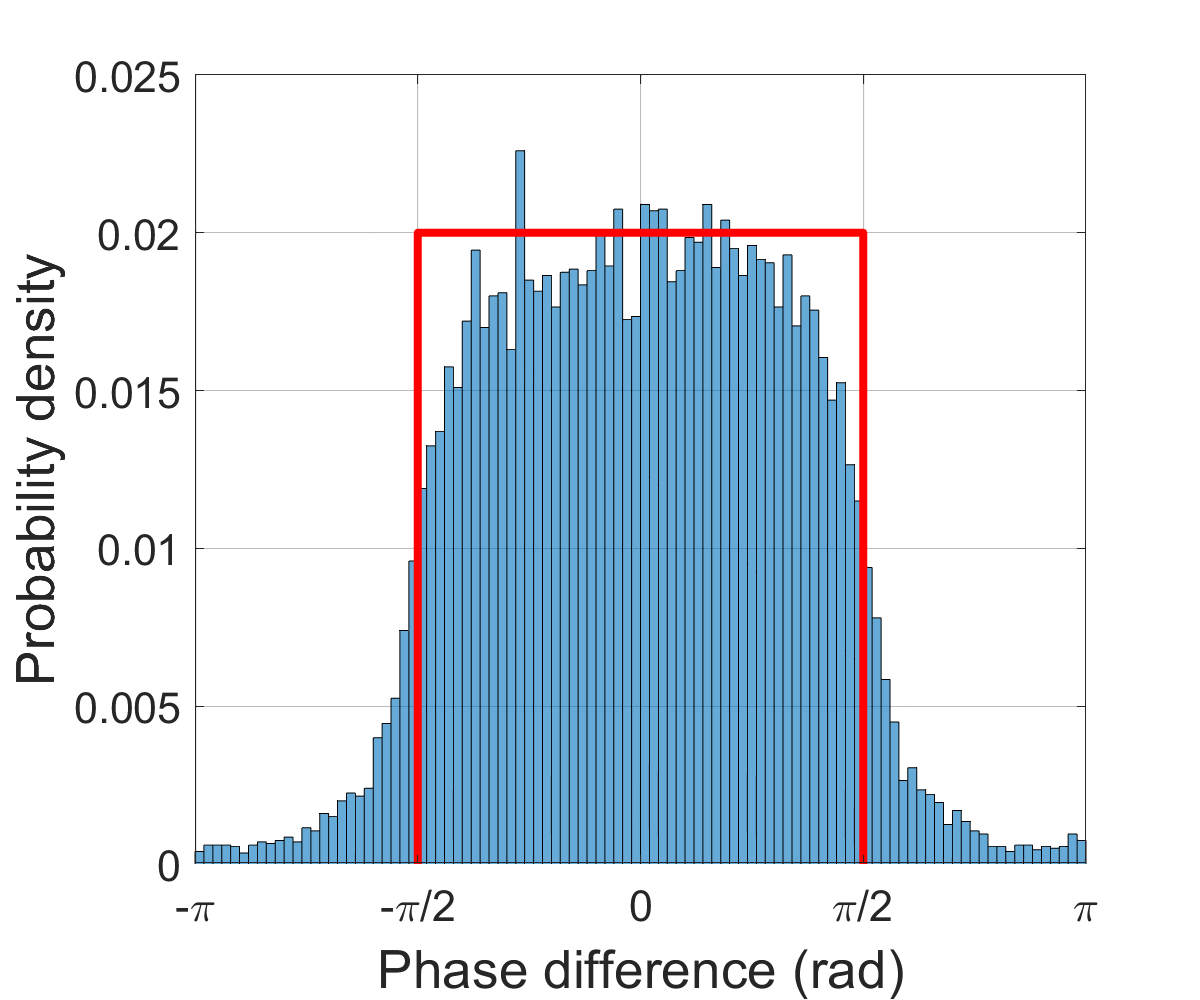}	
    }
\quad
    \subfigure[$K = 2,~T = 20000$.]
    {\includegraphics[scale = 0.131]{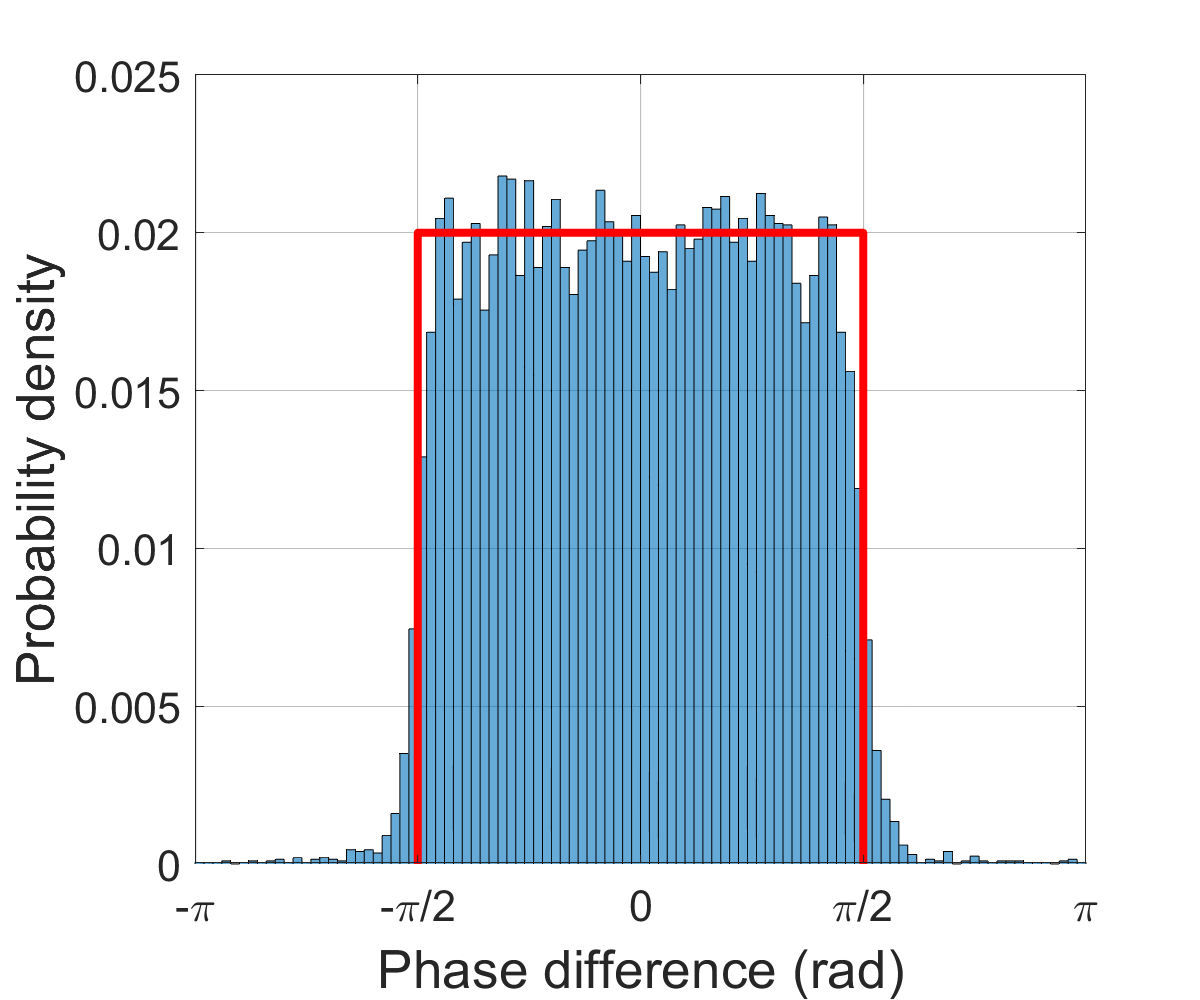}
    }   
    \subfigure[$K = 2,~T = 200000$.]
    {\includegraphics[scale = 0.131]{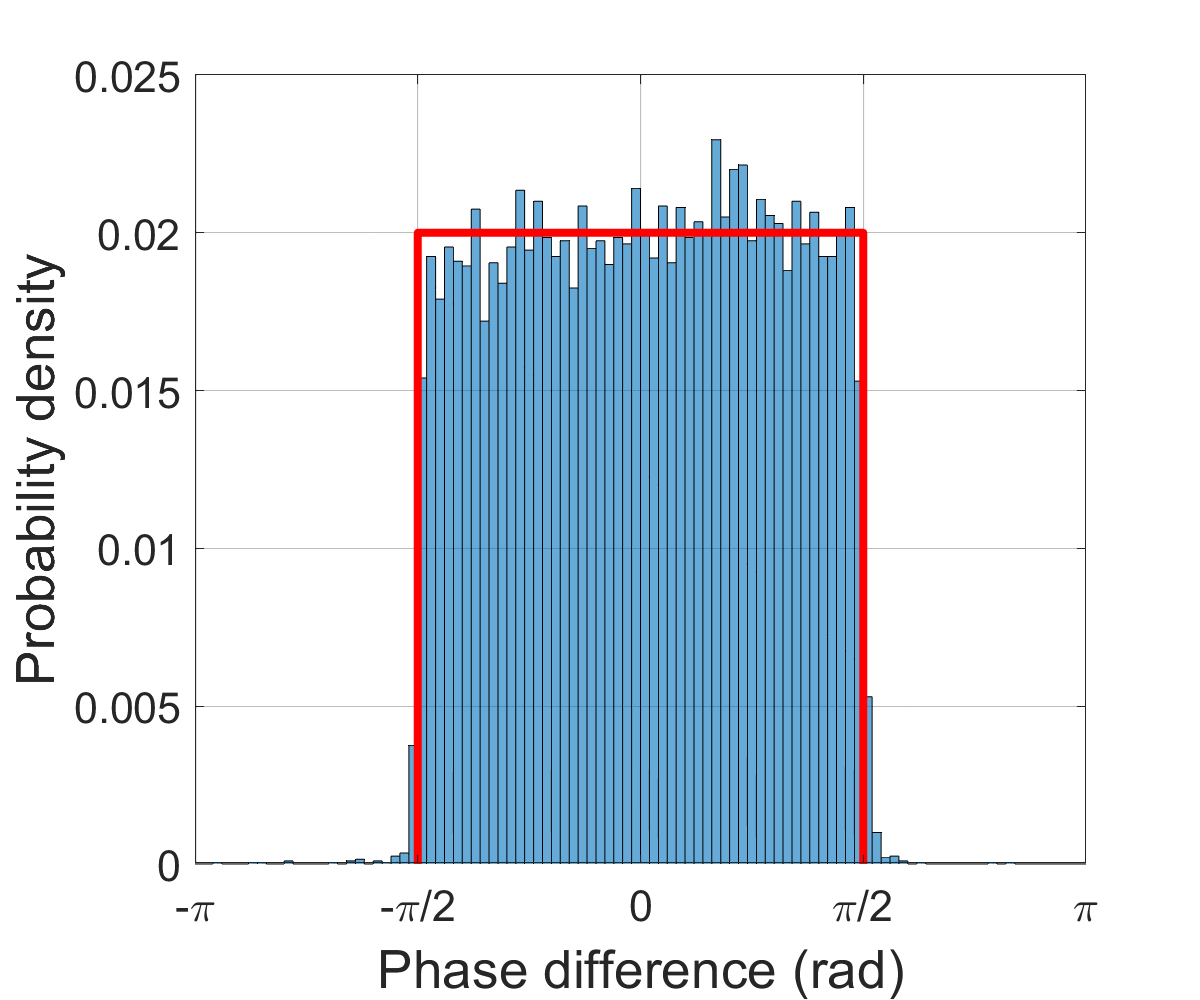}
    }
\quad
    \subfigure[$K = 4,~T = 200$.]
    {\includegraphics[scale = 0.131]{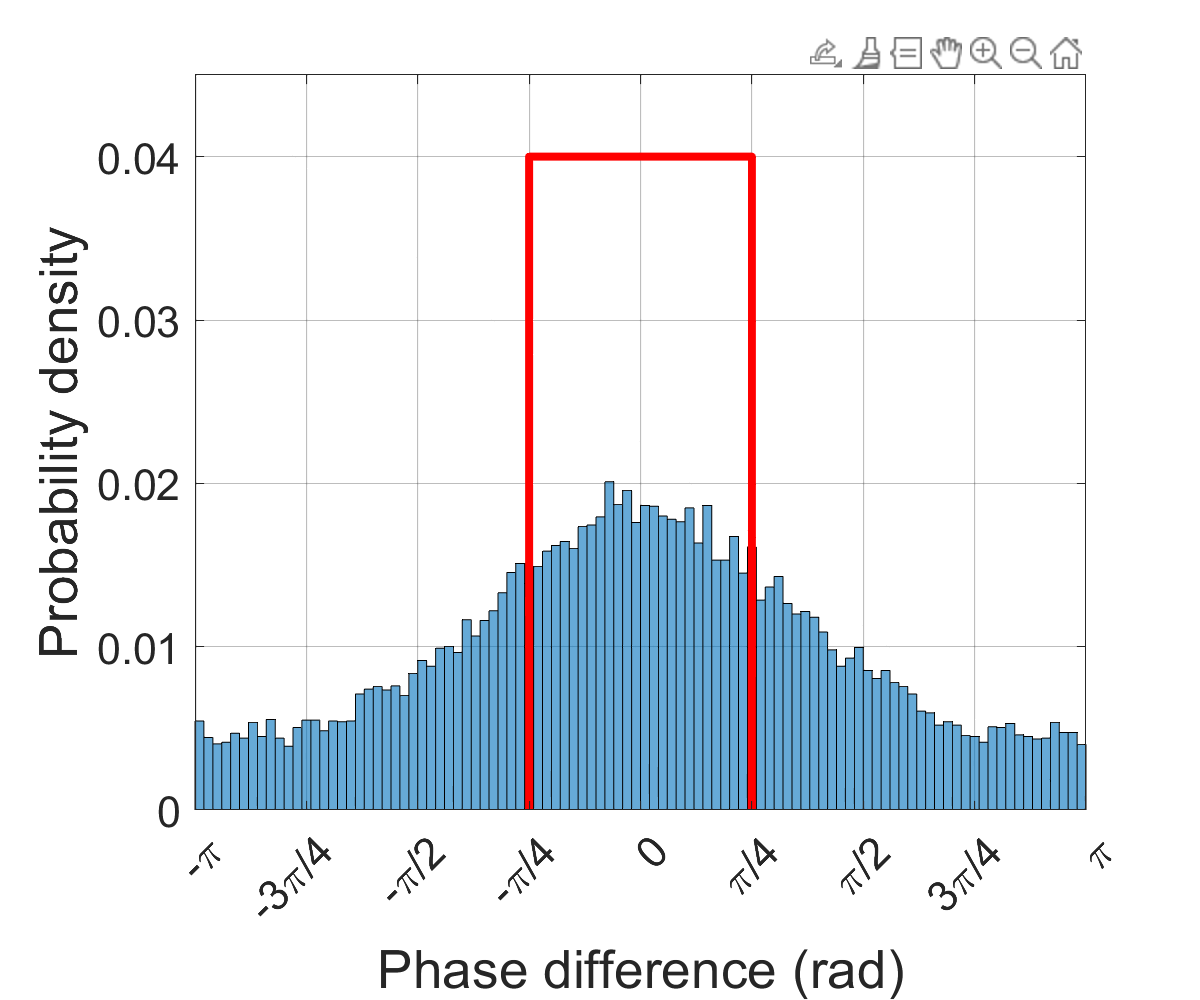}
    }     
     \subfigure[$K = 4,~T = 2000$.]
    {\includegraphics[scale = 0.131]{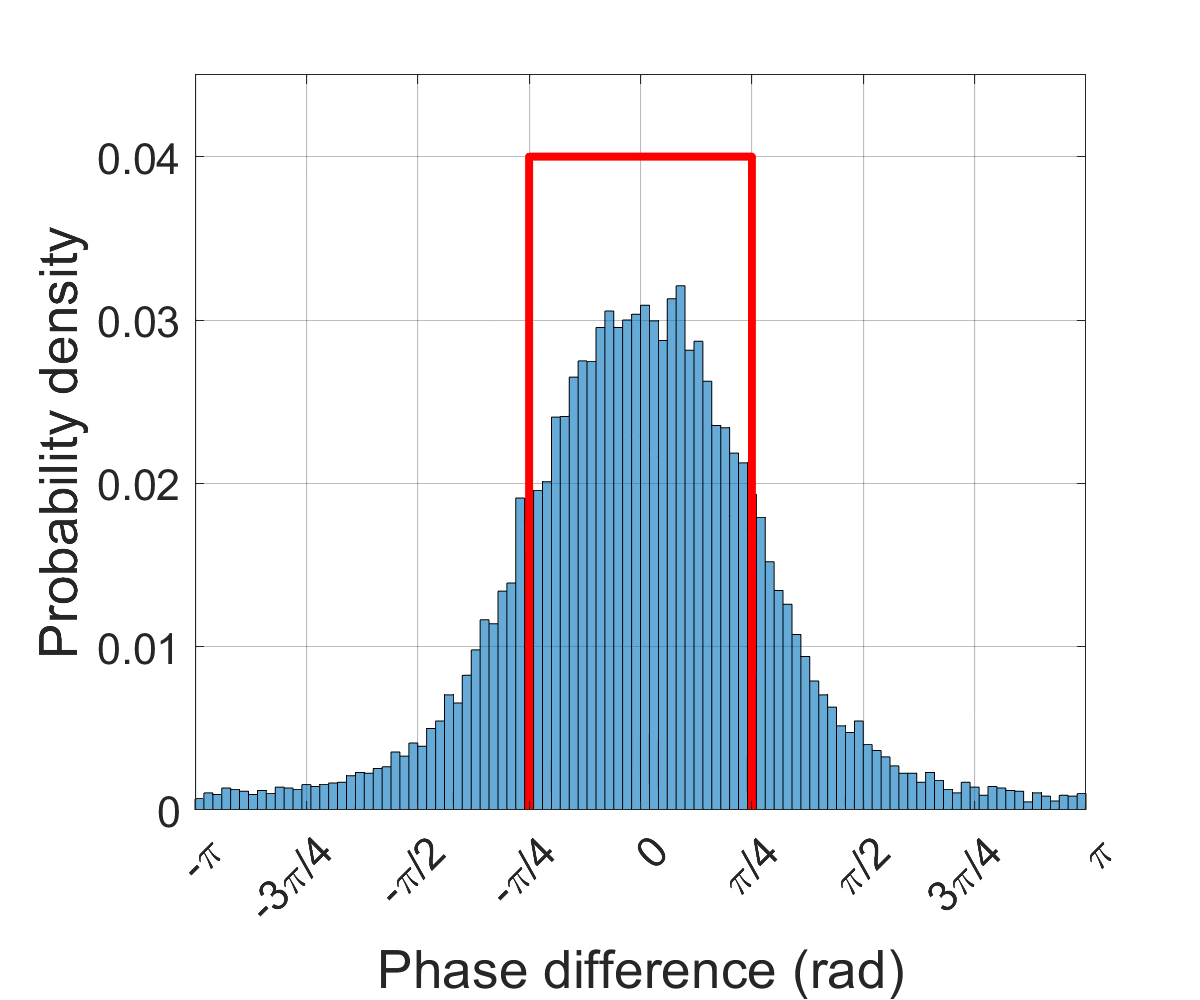}	
    }
\quad
    \subfigure[$K = 4,~T = 20000$.]
    {\includegraphics[scale = 0.131]{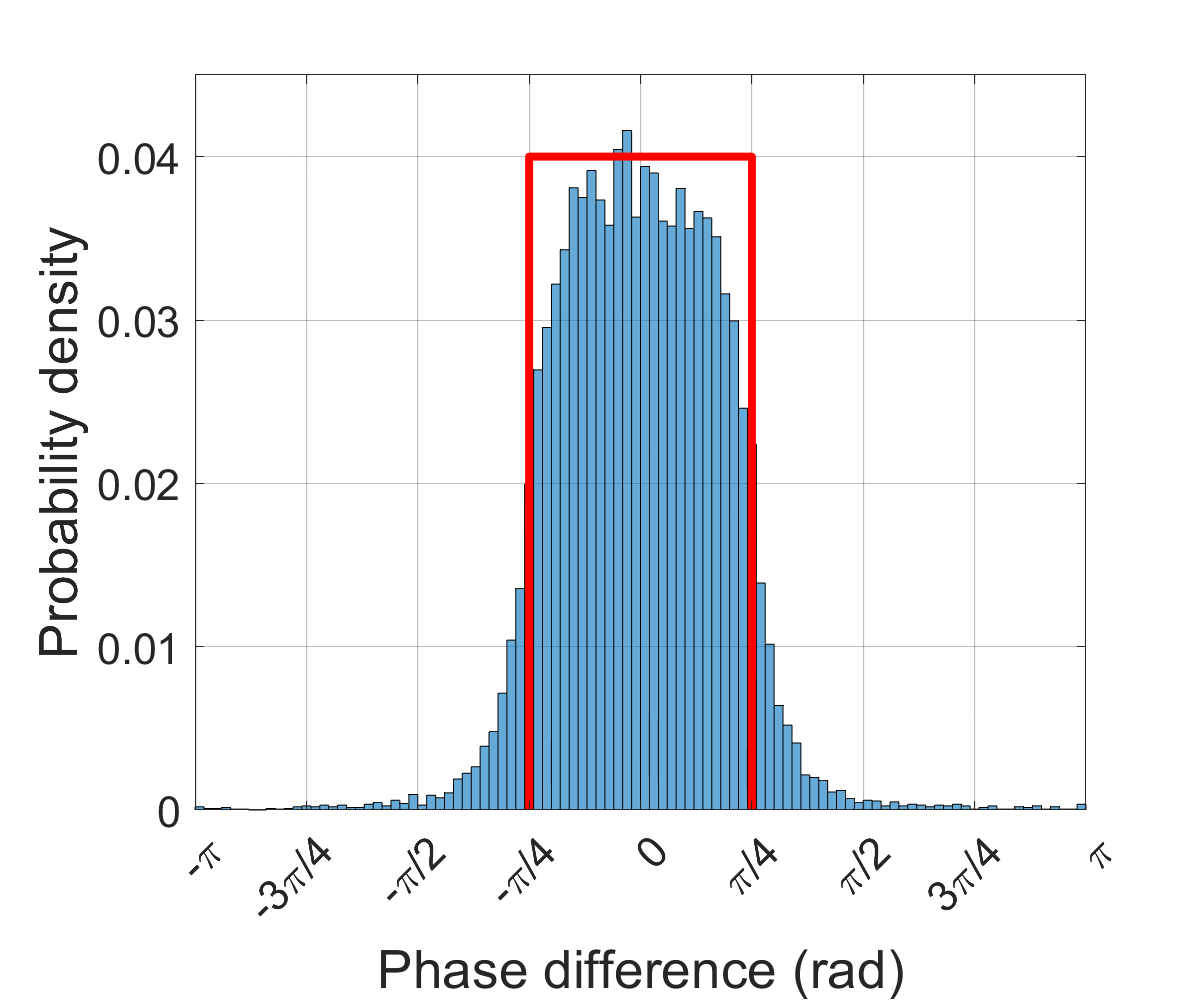}
    }   
    \subfigure[$K = 4,~T = 200000$.]
    {\includegraphics[scale = 0.131]{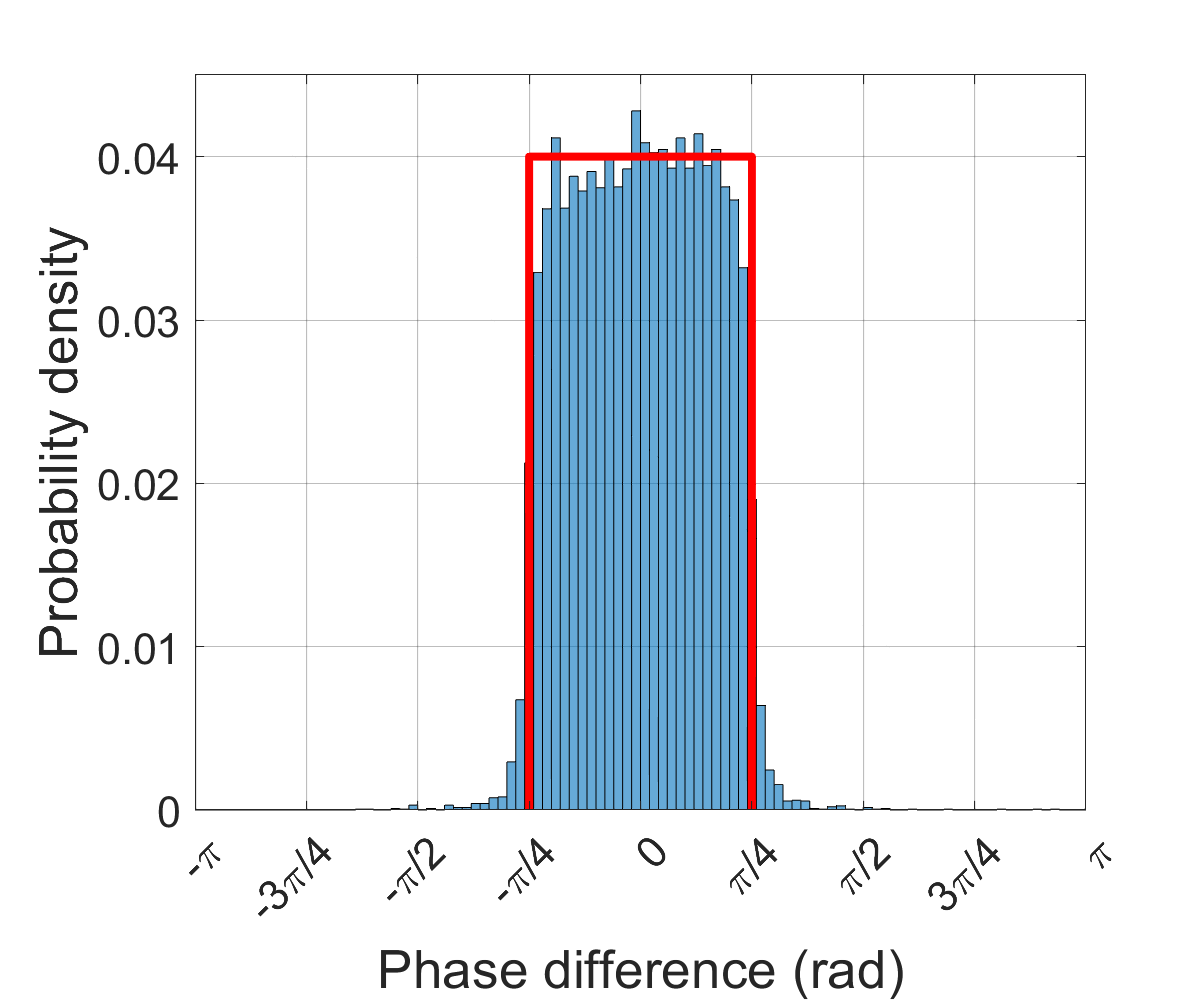}
    }
\caption{Phase alignment performance of the CSM algorithm for $N = 200$, with $K = 2, 4$ and $T = 200, 2000, 20000, 200000$, respectively. The red box indicates the ideal distribution of the phase difference.}
    \label{Performance_of_CSM}
\end{figure*}
We consider significant interference conditions and combine \textbf{Case 2} and \textbf{Case 3} to summarize the expressions of the number of activated IRS elements $SN$ and IPCR that change with $N$:
\begin{equation}
SN =\left\{
\begin{array}{cc}
N, & N< \frac{g_0}{c\rho}+\frac{c^2-1}{2c^2} \\
\frac{g_0}{c\rho}+\frac{c^2-1}{2c^2}, & N \ge \frac{g_0}{c\rho}+\frac{c^2-1}{2c^2}
\end{array}\right..  
\label{SN_result}
\end{equation}

\begin{equation}
\mathrm{IPCR} =\left\{
\begin{array}{cc}
\frac{(g_{0}-Nc\rho)^2+N\rho^2(1-c^2)}{g_{0}^2}, & N< \frac{g_0}{c\rho}+\frac{c^2-1}{2c^2} \\
\frac{1 - c^{2}}{c}\frac{\rho}{g_{0}}\left(1- \frac{1 - c^{2}}{4c}\frac{\rho}{g_0}\right), & N \ge \frac{g_0}{c\rho}+\frac{c^2-1}{2c^2}
\end{array}\right..  
\label{IPCR_result}
\end{equation}

To provide concrete intuition for the scale of IRS elements required in practice, we present several numerical examples based on Equation (\ref{SN_result}). The threshold number of elements $N_{\rm{th}} = \frac{g_0}{c\rho}+\frac{c^2-1}{2c^2}$ determines whether full activation ($S=1$) or partial activation is optimal. We consider typical parameter ranges: the amplitude ratio $g_0/\rho$ between the direct and reflected interference channels, and the phase resolution $K$ that determines $c$ through $c = \sqrt{\pi} \frac{\sin(\omega/2)}{\omega}$ and $\omega = 2\pi / K$. Table \ref{table:N_th_examples} lists $N_{\rm{th}}$ for different settings.
The results indicate that to achieve effective interference suppression (i.e., operate in the saturation regime where $N \ge N_{\rm{th}}$), the required IRS scale ranges from several tens to several hundred elements under common scenarios. 
Furthermore, the theoretical curve of Eq. (\ref{SN_result}) is plotted in Fig. \ref{fig_NS_N} of Section \ref{sec_numerical_simulation} and shows excellent agreement with Monte Carlo simulations, thereby validating the analysis.
\begin{table}[h]
\centering
\caption{Examples of the Threshold Number of IRS Elements $N_{\rm{th}}$}
\label{table:N_th_examples}
\renewcommand{\arraystretch}{1.5} 
\begin{tabular}{c|c|c|c}
\hline
$g_0/\rho$ & $K$ & $c$ & $N_{\rm{th}}$ \\
\hline
10 & 2 & 0.564 & 17 \\
10 & 4 & 0.798 & 12 \\
100 & 2 & 0.564 & 176 \\
100 & 4 & 0.798 & 125 \\
\hline
\end{tabular}
\end{table}



Note that the analytical results in (\ref{SN_result}) and (\ref{IPCR_result}) are derived under the assumption of the optimal proportion value $S$, which will be numerically verified in Section \ref{Verification_of_Theoretical_Analysis}. In fact, however, the proposed PPI algorithm performs a joint search for both the proportion $S$ and the specific combination of activated IRS elements. This additional combinatorial search can yield substantial performance gains that are challenging to characterize analytically. Nevertheless, the overall SINR improvements achieved by the proposed schemes are thoroughly evaluated under various parameter settings in Section \ref{SINR_Improvement}.

\subsection{Legitimate Signal Power Change Ratio (LSPCR)}
The IRS configuration is designed based solely on the interference channel, which is generally independent of the legitimate channel. Therefore, under the optimized IRS configuration, the distributions of the reflected legitimate channels remain unchanged. Therefore, the expected legitimate signal power is given by:   
\begin{equation}
    \begin{split}
\mathbb{E}[|Y-W|^2] = |h_0|^2 + SN\rho^2.  
    \end{split}
\label{expected_legiminate_power}
\end{equation}
The LSPCR is represented as:
\begin{equation}
    \begin{split}
\mathrm{LSPCR} &= \frac{|h_0|^2 + SN\rho^2}{|h_0|^2}= 1+\frac{\rho^2}{|h_0|^2}{SN}.
\label{LSPCR_boost}
\end{split}
\end{equation}

\subsection{Complexity Analysis}
The computational complexity of the proposed scheme arises from two components:
\begin{itemize}
\item CSM algorithm: Requires $T$ RSS measurements with computational complexity $\mathcal{O}(NT)$\cite{CSM_2023}.
\item PPI algorithm: Requires $M/s$ RSS measurements with computational complexity $\mathcal{O}(NM/s)$.
\end{itemize}
Therefore, the total computational complexity of the proposed scheme scales linearly with $N$, which is typically minimal. The total number of RSS measurements is $T+M/s$, typically amounting to a few thousand in practice (see Section~\ref{sec_numerical_simulation}). Since each measurement can be performed with a single symbol duration, the total execution time of the proposed scheme is equivalent to transmitting $T+M/s$ symbols, which typically ranges from milliseconds to seconds.

\section{NUMERICAL RESULTS}
\label{sec_numerical_simulation}
This section presents a comprehensive performance evaluation of the proposed blind anti-interference scheme through Monte-Carlo simulations. The channel model described in Section \ref{sec:system_model} is adopted throughout the simulations. Unless otherwise specified, the system parameters are configured as follows: transmit powers $P_1 = P_2 = 30$ dBm, noise power $\sigma^2 = -94$ dBm. To facilitate a direct and meaningful verification of our theoretical analysis in Eqs. (\ref{SN_result}, \ref{IPCR_result}, \ref{LSPCR_boost}), we fixed the direct channel coefficients to $|h_0|^2 = |g_{0}|^2 = -100$ dB\footnote{Varying $|g_0|$ and $|h_0|$ would cause the metrics to fluctuate between the piecewise regimes defined by Eqs. (\ref{SN_result}, \ref{IPCR_result}, \ref{LSPCR_boost}) for a given $N$, obscuring the comparison between theoretical and simulated curves. This approach of using fixed parameters to isolate and validate a specific theoretical relationship is common in analytical performance evaluation \cite{fixed_parameters, goldsmith2005wireless}.} with random phases. Consequently, in the absence of IRS, the original signal-to-noise ratio (SNR) and signal-to-interference ratio (SIR) are calculated as $\mathrm{SNR} = \frac{P_1|h_0|^2}{\sigma^2} = 24$ dB and $\mathrm{SIR} = \frac{P_1|h_0|^2}{P_2|g_0|^2} = 0$ dB, respectively, indicating that interference constitutes the dominant performance limitation. 
The reflection channel coefficients are modeled as $h_n, g_{n} \overset{\text{i.i.d.}}{\sim} \mathcal{CN}(0,-140~ \text{dB}), n\in \mathcal{N}$, i.e., $\rho^2 = -140~ \text{dB}$. This setting results in an amplitude ratio $|g_0|/\rho = 100$ between the direct and reflected interference channels, consistent with the practical deployment guidance provided in Section \ref{sec_Deployment_Settings}. Based on the theoretical analysis summarized in Table \ref{table:N_th_examples}, the required number of IRS elements is 176 for a phase quantization level of $K = 2$, and 125 for $K = 4$.
The hyperparameters in the CSM and PPI algorithms are set as follows: $T = 10N$, $s = 0.01$ and $M = 10$, incurring a total $10N+1000$ samples. The achievable SINR is computed using equation (\ref{eq:sinr}). 
For comparative analysis, four benchmark schemes are implemented:
\begin{itemize}
   \item \textbf{Upper Bound}: Assuming that interference is completely suppressed, i.e., interference equals zero.
   \item \textbf{Random Phase}: Employs randomly generated IRS phase configurations, representing the baseline scenario without prior information.
   \item \textbf{Random-Min Sampling (RMS)}: From the dataset $\{[\boldsymbol{\theta}_t, |W_t|^2]\},  t = 1, 2, \dots, T$ used in the CSM algorithm, selects the IRS phase configuration that minimizes the measured power, i.e., \[
   \boldsymbol{\theta}^{\text{RMS}} = \arg \min_{\boldsymbol{\theta}_t} |W_t|^2.
   \]
   \item \textbf{CPP with CSI}: Given perfect CSI of the legitimate channels $h_n, n \in \mathcal{N}$, applies continuous phase pursuit and rounds the ideal continuous phases to the nearest discrete values\footnote{Under strong interference conditions, accurate estimation of the legitimate channel requires extensive pilot transmission, introducing significant overhead and delay, which may be practically infeasible.}.
\end{itemize}
All performance results are averaged over 1000 independent channel realizations.

\subsection{Verification of Theoretical Analysis}
\label{Verification_of_Theoretical_Analysis}
This subsection verifies the analytical results derived in Section \ref{Sec_Performance_Analysis} by comparing simulation outcomes with theoretical values.

Fig.~\ref{Performance_of_CSM} illustrates the phase alignment performance of the CSM algorithm. As the sample size $T$ increases, the distribution of phase differences converges toward the ideal distribution. With finite samples (e.g., $T=2000$), the phase alignment remains imperfect, and larger values of $K$ exacerbate this misalignment. Such inaccuracies in phase alignment can degrade the performance of the proposed scheme to some extent, for which the impact of $T$ on the achievable SINR is further examined in Fig.~\ref{fig_SINR_T}.

\begin{figure}[t]
\centering
\includegraphics[width=0.5\textwidth]{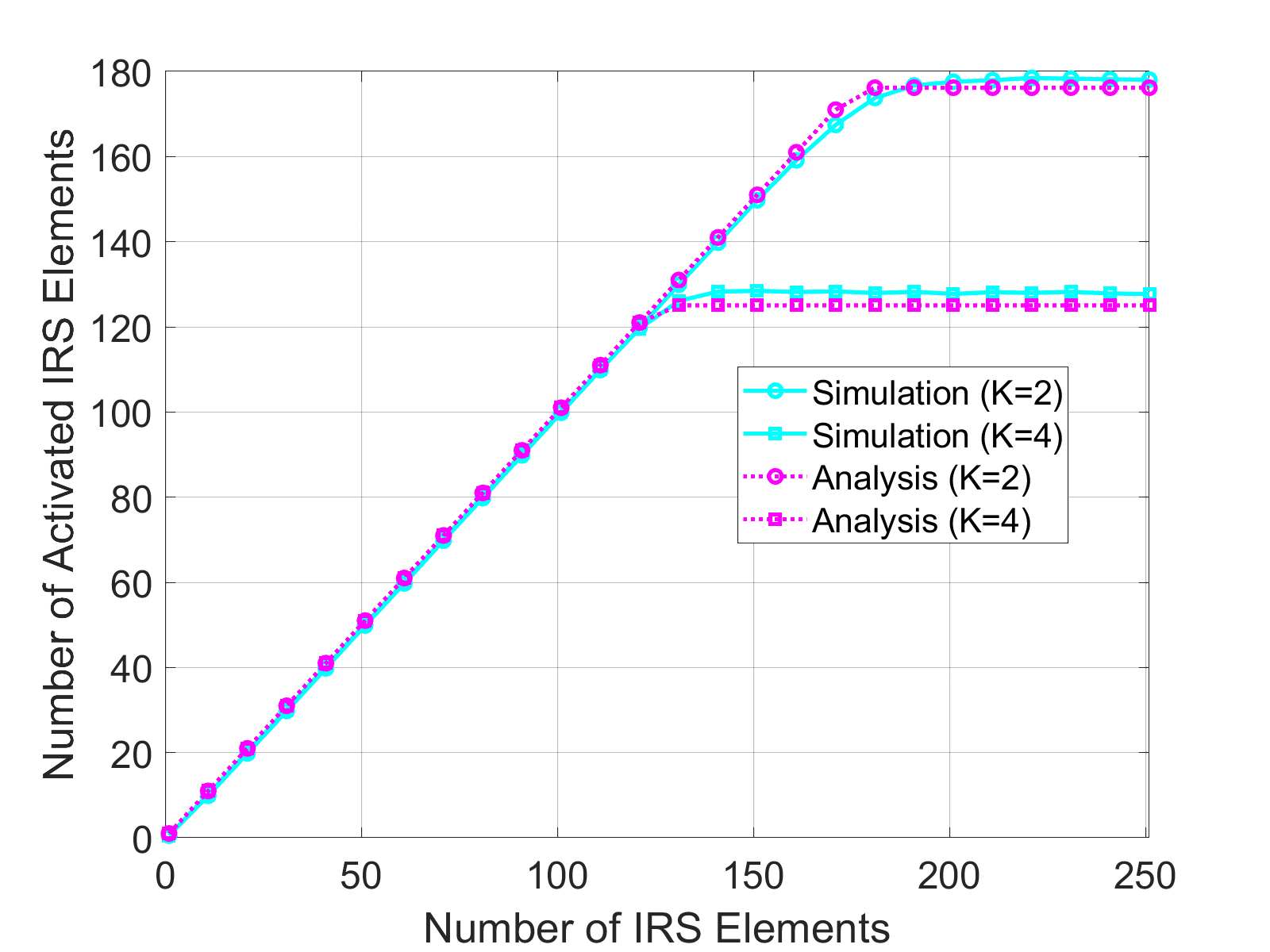}
\caption{Number of activated IRS elements $SN$ versus the total number of IRS elements $N$, for different numbers of phase levels $K = 2, 4$.}
\label{fig_NS_N}
\end{figure}

\begin{figure}[ht]
\centering
\includegraphics[width=0.5\textwidth]{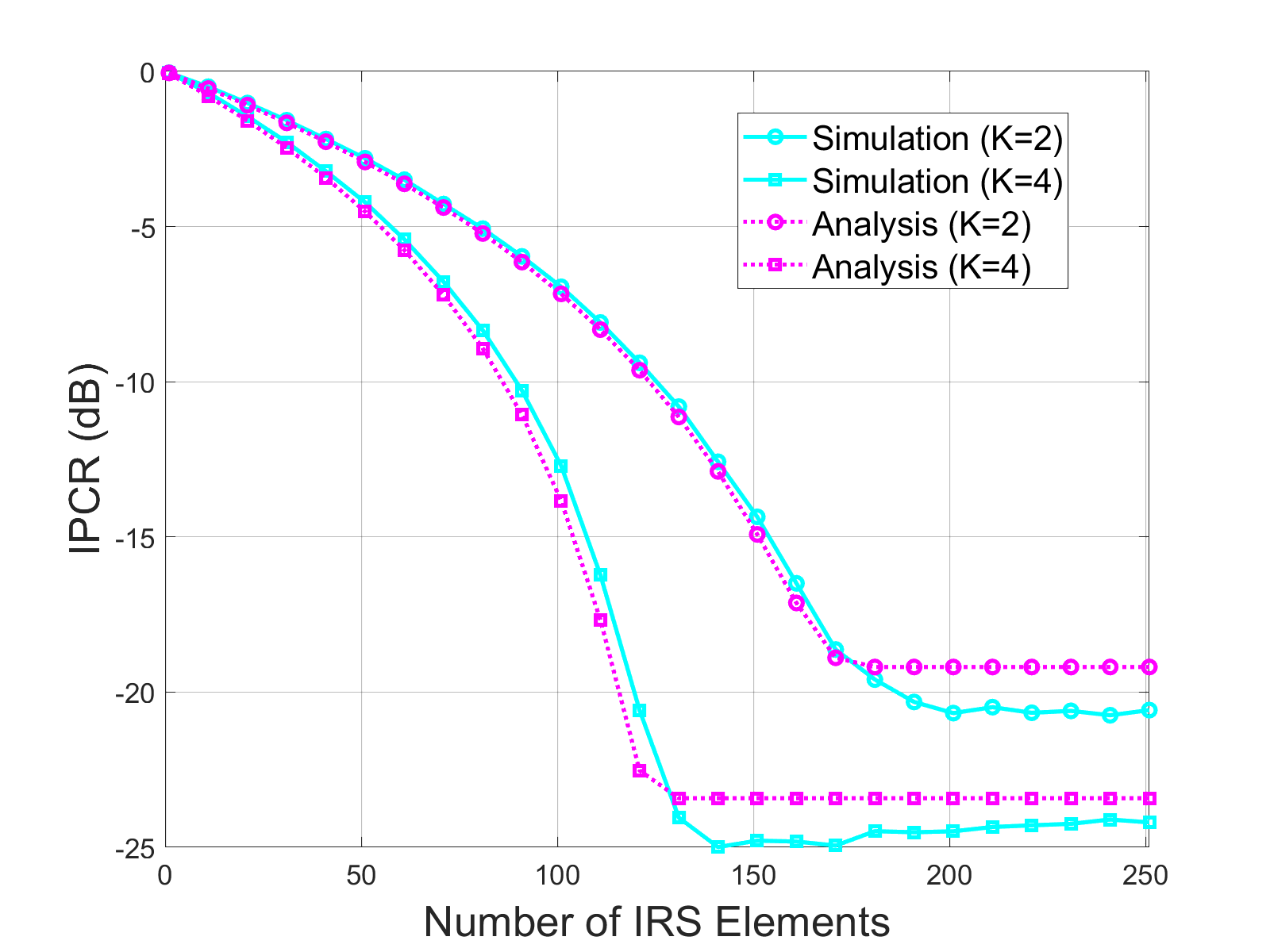}
\caption{$\mathrm{IPCR}$ v.s. $N$, for $K = 2, 4$.}
\label{fig_IPCR_N}
\end{figure}

\begin{figure}[ht]
\centering
\includegraphics[width=0.5\textwidth]{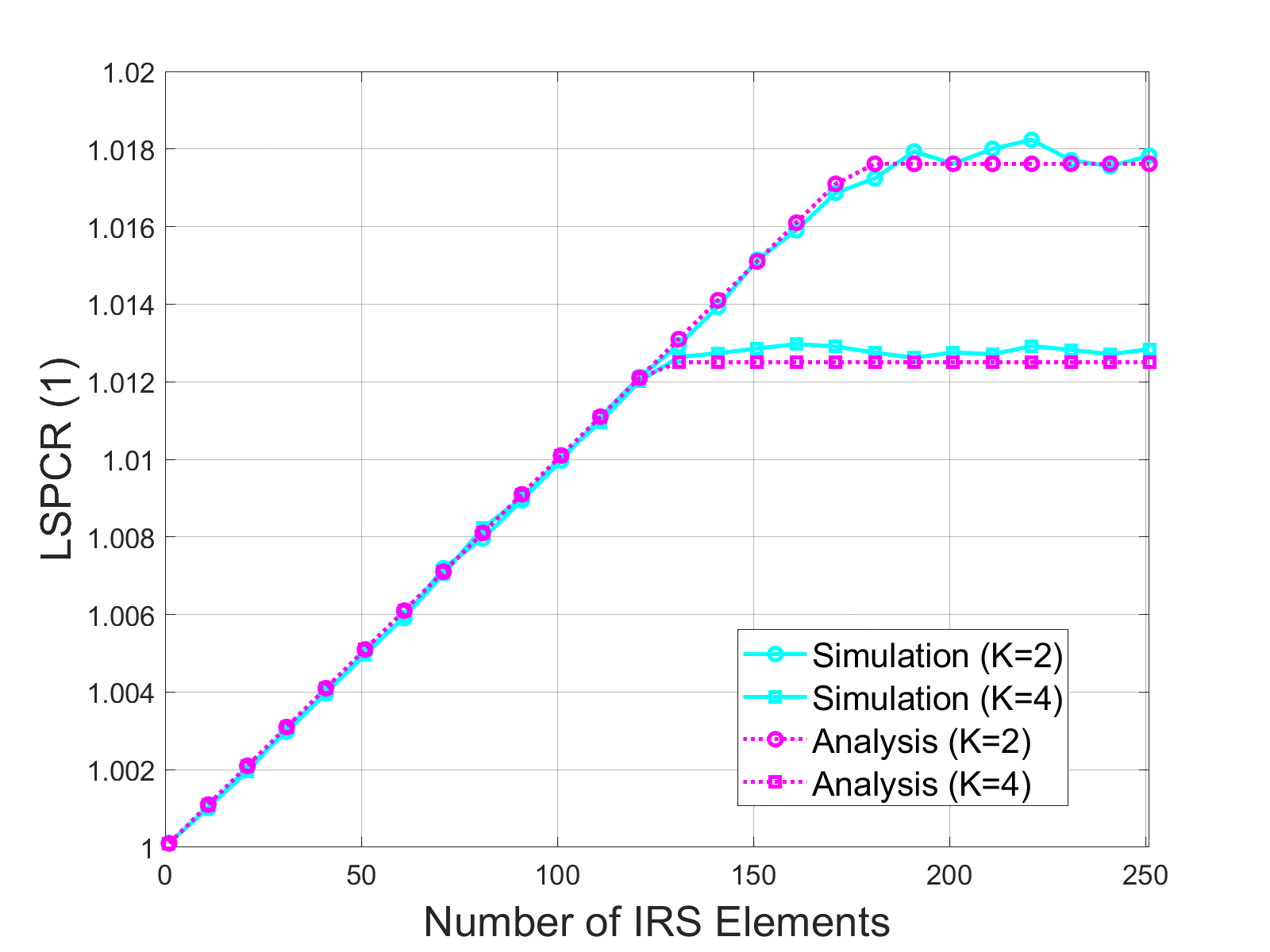}
\caption{$\mathrm{LSPCR}$ v.s. $N$, for $K = 2, 4$.}
\label{fig_LSPCR_N}
\end{figure}

Figs.~\ref{fig_NS_N}, \ref{fig_IPCR_N}, and \ref{fig_LSPCR_N} compare simulation results with theoretical values for $SN$, IPCR, and LSPCR, respectively. The analytical values are computed using Eqs.~(\ref{SN_result}), (\ref{IPCR_result}), and (\ref{LSPCR_boost}). To validate the theoretical analysis via simulation, the third step of the PPI algorithm is modified: instead of selecting the IRS configuration that minimizes the received interference power, we choose the proportion $S$ that achieves the minimum average interference power, and calculate the corresponding average IPCR and LSPCR. Additionally, to guarantee the phase alignment performance of the CSM algorithm, the number of samples is set to $T=100N$.

Overall, the simulation results in Figs.~\ref{fig_NS_N}--\ref{fig_LSPCR_N} align closely with the theoretical predictions, with minor discrepancies attributable to imperfections in the CSM algorithm. In Fig.~\ref{fig_NS_N}, the number of activated IRS elements initially increases with $N$ and eventually saturates. The rising trend indicates insufficient IRS deployment, leading to full activation of all elements. For practical applications, it is essential to deploy a sufficient number of IRS elements (operating in the saturation regime of Fig.~\ref{fig_NS_N}) to achieve effective interference suppression.
Fig.~\ref{fig_IPCR_N} shows that the IPCR can be reduced below $-20$ dB, indicating substantial interference suppression. In addition to the theoretical gain from searching for the optimal $S$, the proposed scheme further improves performance by additionally searching over combinations of activated IRS elements, as will be demonstrated in Section \ref{SINR_Improvement}. Moreover, Fig.~\ref{fig_LSPCR_N} reveals that the legitimate signal power is only weakly enhanced. This is because the configuration of IRS is independent of the legitimate channel.

\begin{figure}[t]
\centering
\includegraphics[width=0.5\textwidth]{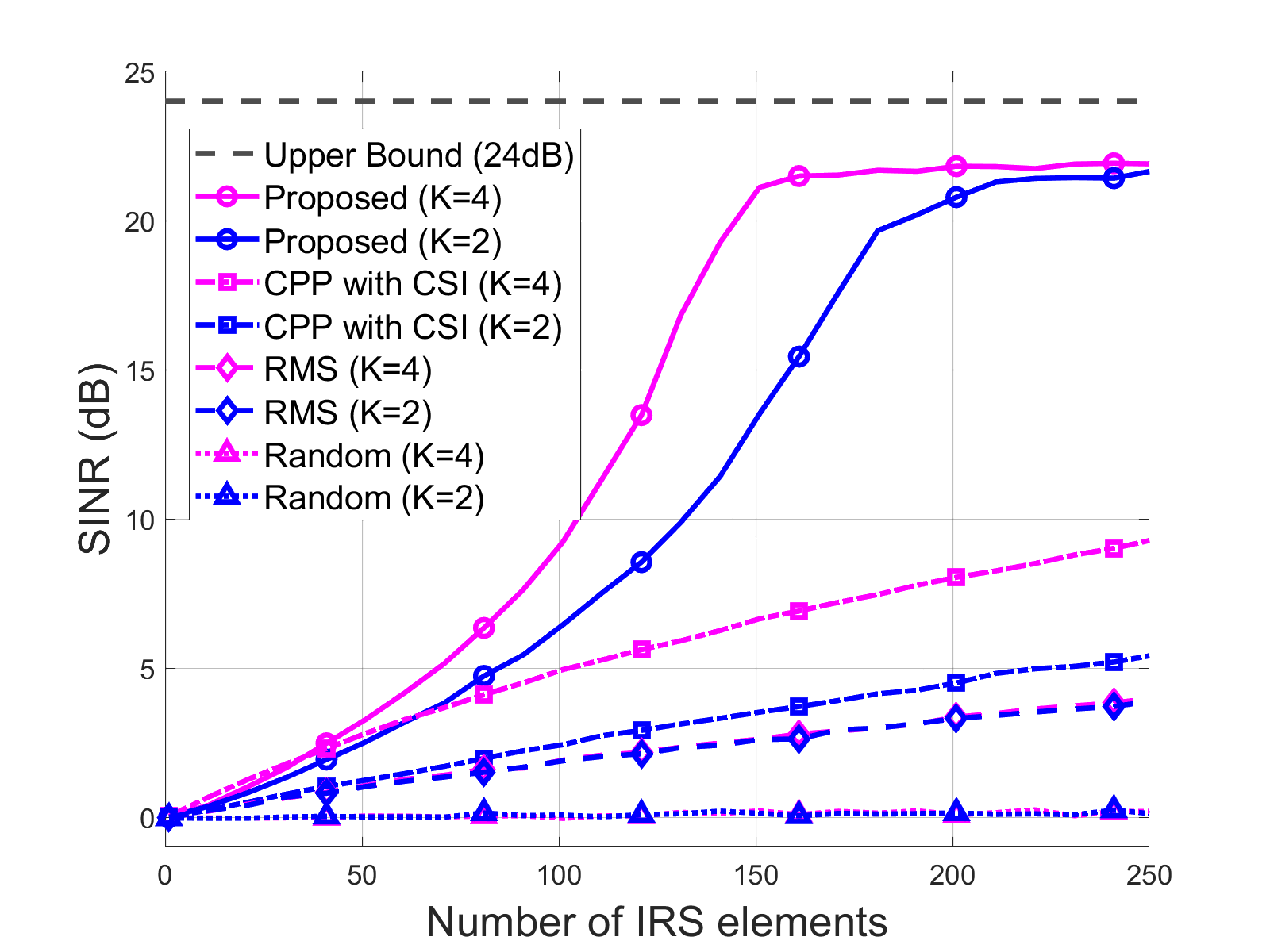}
\caption{$\mathrm{SINR}$ v.s. $N$, for $K = 2, 4$.}
\label{fig_SINR_N}
\end{figure}

\begin{figure}[t]
\centering
\includegraphics[width=0.5\textwidth]{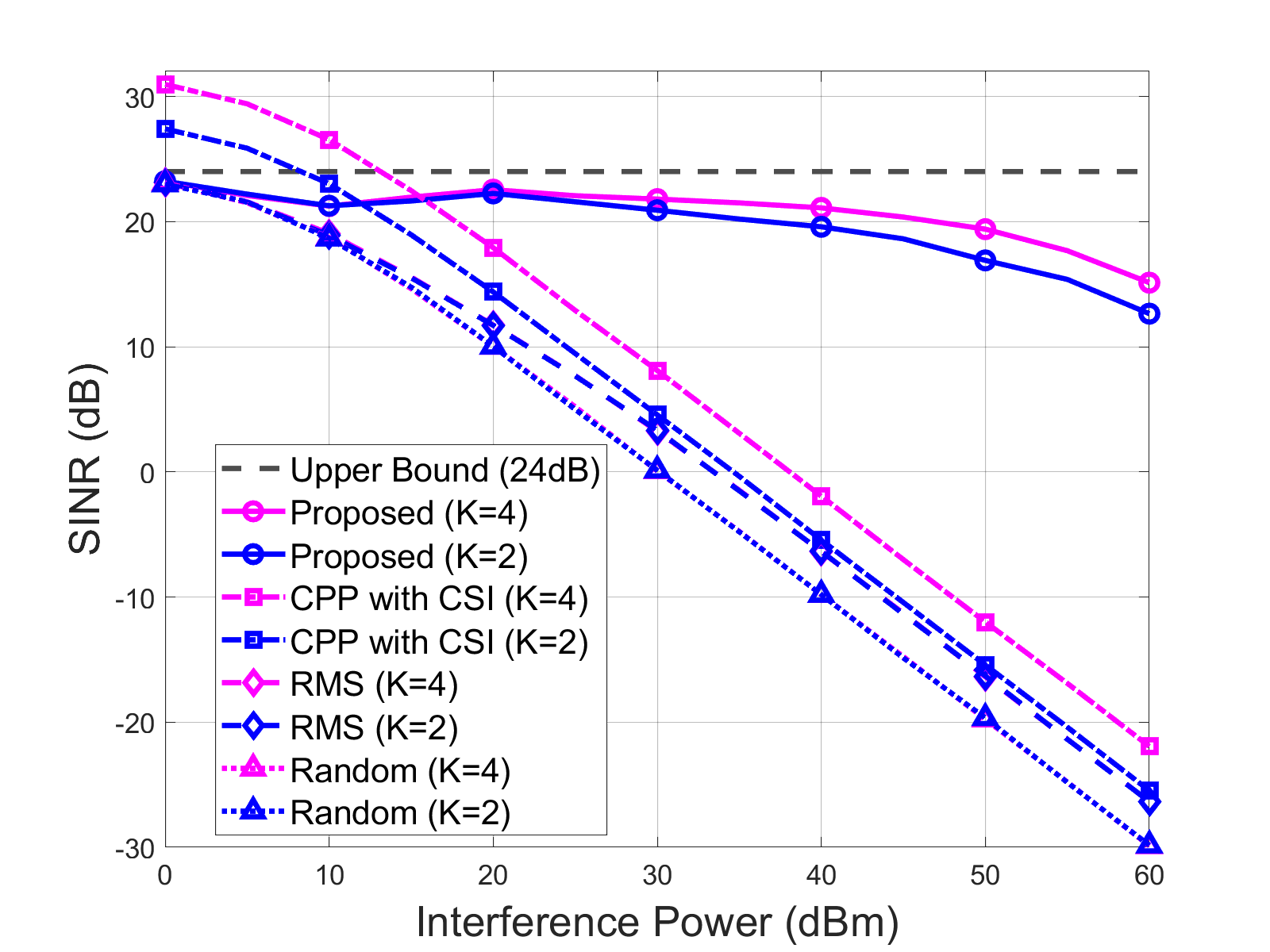}
\caption{$\mathrm{SINR}$ v.s. $P_2$, for $K = 2, 4$.}
\label{fig_SINR_P2}
\end{figure}

\begin{figure}[t]
\centering
\includegraphics[width=0.5\textwidth]{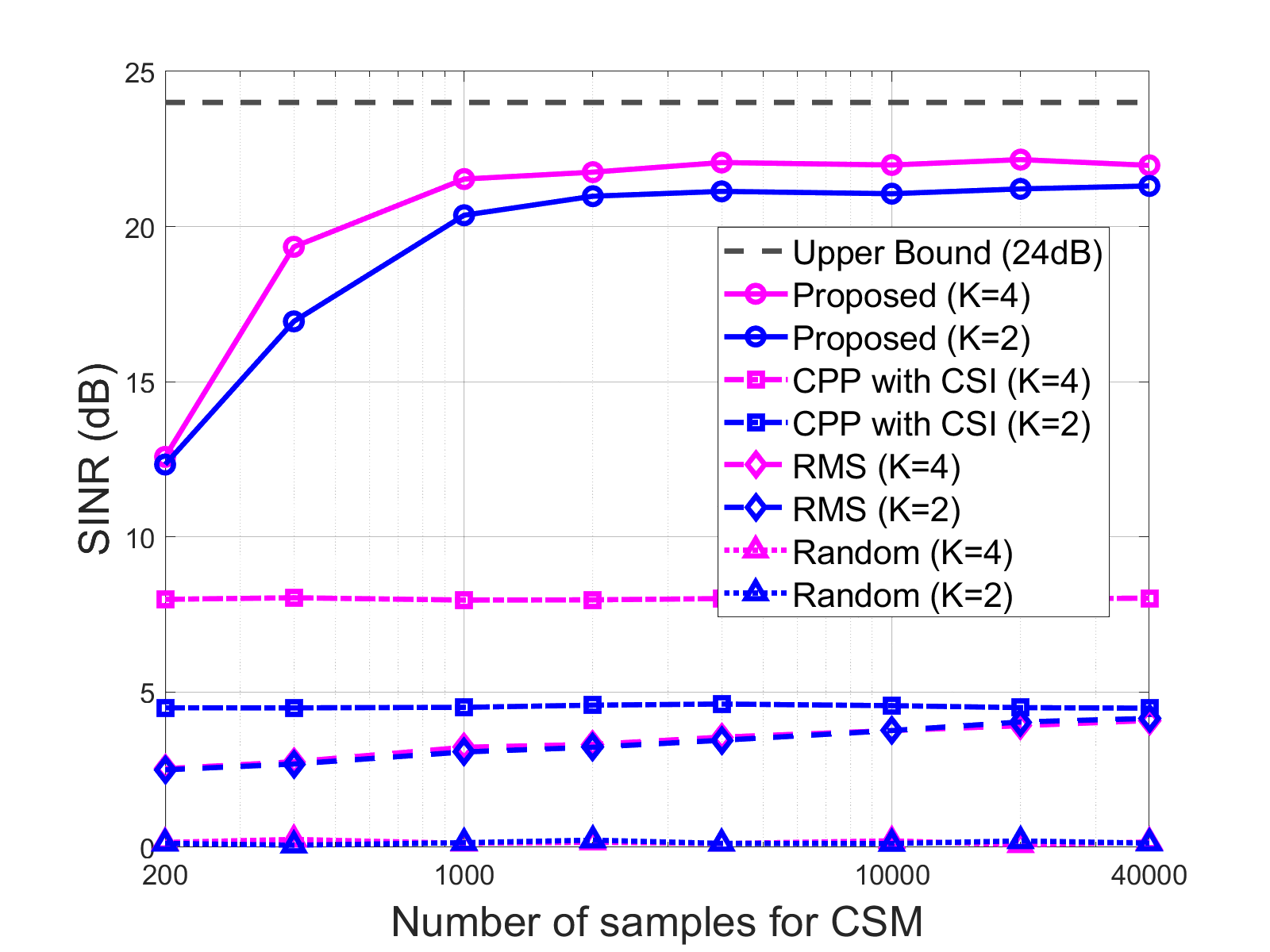}
\caption{$\mathrm{SINR}$ v.s. $T$, for $K = 2, 4$.}
\label{fig_SINR_T}
\end{figure}

\subsection{SINR Improvement}
\label{SINR_Improvement}
This subsection evaluates the achievable SINR of the proposed scheme and several benchmark methods, focusing on the influences of the number of IRS elements $N$, the interference power $P_2$, and the sample size $T$.

Fig. \ref{fig_SINR_N} illustrates the SINR performance as a function of $N$, where the initial SINR (without IRS) is approximately 0 dB. Several key observations can be drawn from the results. As $N$ increases, the proposed scheme achieves rapid SINR improvement before saturation, with gains exceeding 20 dB. In the saturation regime, the gap between the proposed scheme and the upper bound is only about 2 dB, which means that with the proposed scheme, the interference power can be suppressed to the noise level.
In contrast, although the CPP method employs perfect legitimate CSI to maximize the legitimate signal power, it performs suboptimally under strong interference, where interference suppression becomes more critical than signal enhancement. Moreover, acquiring accurate CSI under such strong-interference conditions entails considerable practical difficulties. The proposed scheme, however, attains substantial SINR improvement without requiring any prior CSI. Meanwhile, the RMS method offers only marginal performance gains, which is due to its limited exploitation of statistical information from the received power measurements. The random phase scheme enhances both the interference and legitimate signals equally due to the absence of prior information, leading to negligible SINR improvement and thus approximating the initial SINR. Finally, it is noteworthy that the saturation values of SINR are nearly identical for different phase quantization levels $K$, suggesting that similar interference suppression performance can be achieved with coarser phase quantization by increasing the number of IRS elements. This trade-off offers valuable insights for designing cost-effective IRS-assisted systems in practice.

Fig.~\ref{fig_SINR_P2} shows the SINR as a function of interference power $P_2$, where the random phase curve approximates the initial SINR. As $P_2$ increases, the SINR of all schemes declines, with the proposed method showing the slowest decrease. Under strong interference ($P_2>15$ dBm), the proposed scheme achieves the largest improvement, e.g., raising SINR from $-30$ dB to about $15$ dB (a $45$ dB gain) at $P_2 = 60$ dBm. This demonstrates its robustness in high-interference regime where other methods falter.
As interference diminishes, the curves of the proposed, RMS, and random phase schemes converge, and the CPP method outperforms the proposed scheme, reflecting the reduced need for interference suppression and the increasing need for legitimate signal enhancement. This suggests the value of dynamically switching between interference suppression and signal enhancement modes under varying levels of interference intensity—a direction for future study.

Fig.~\ref{fig_SINR_T} depicts the effect of sample size $T$ on SINR. The proposed scheme’s SINR rises quickly with $T$ before saturation, indicating that a moderate $T$ (e.g., $T=1000$) suffices for near-optimal performance, supporting the scheme’s practical efficiency. By contrast, the RMS scheme improves slowly with $T$. The CPP and random phase methods remain unaffected by $T$, as they do not rely on the interference suppression process.

\subsection{Performance in Multi-UE Scenarios}
\label{SINR_InterUEdist}
As discussed in Remark \ref{remark_multi_UE}, in multi-UE scenarios, a given UE also receives interference signals reflected by the IRSs serving other UEs. To evaluate the performance of the proposed scheme under such conditions, we consider a scenario with two UE-IRS pairs (denoted as UE1–IRS1 and UE2–IRS2) simultaneously executing the interference suppression procedure. Both pairs share identical channel model parameters. 
Owing to symmetry, we analyze only the performance of UE1. Let $d_{11}$ denote the distance between IRS1 and UE1, and $d_{21}$ denote the distance between IRS2 and UE1. Since each UE is deployed near its associated IRS, $d_{21}$ approximates the inter-UE distance. Based on the model in Section~\ref{sec_Deployment_Settings}, the expected power of the interferer–IRS2–UE1 channel, relative to that of the interferer–IRS1–UE1 channel, is attenuated by a factor of $(d_{21}/d_{11}) ^ 2 $. 

Fig.~\ref{fig_SINR_interUEdist} illustrates the impact of $d_{21}$ on the SINR of UE1, with $d_{11}$ fixed at 0.5~m. The results show that the performance of the proposed scheme degrades only when $d_{21}$ is very small (i.e., $d_ {21}<2$ m). In typical deployments where inter-UE distances exceed 2~m, the interference contributed by other IRSs becomes negligible. This confirms the effectiveness and robustness of the proposed scheme in multi-UE settings.

\begin{figure}[t]
\centering
\includegraphics[width=0.5\textwidth]{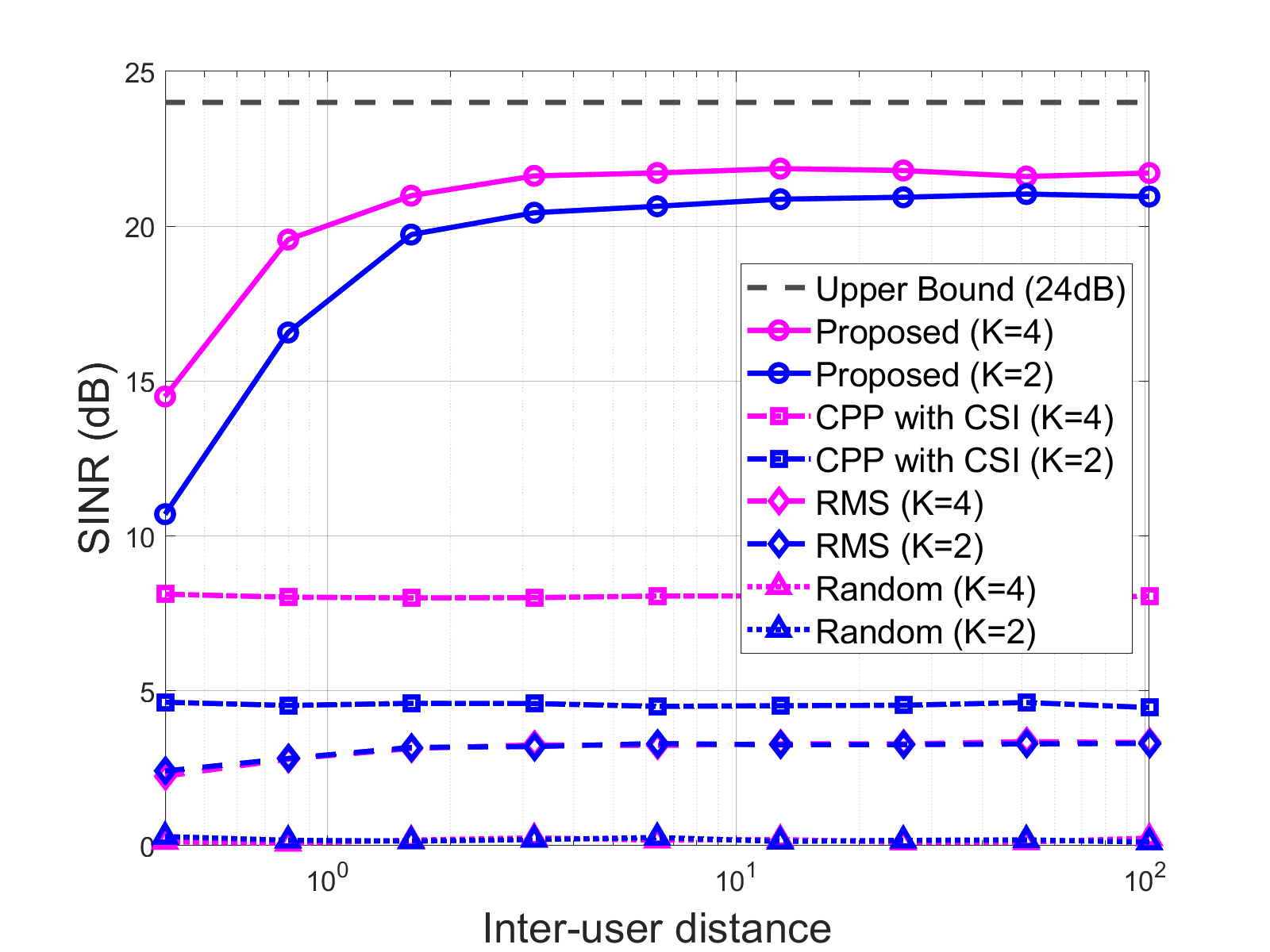}
\caption{$\mathrm{SINR}$ v.s. $d_{21}$.}
\label{fig_SINR_interUEdist}
\end{figure}
\begin{figure}[t]
\centering
\includegraphics[width=0.5\textwidth]{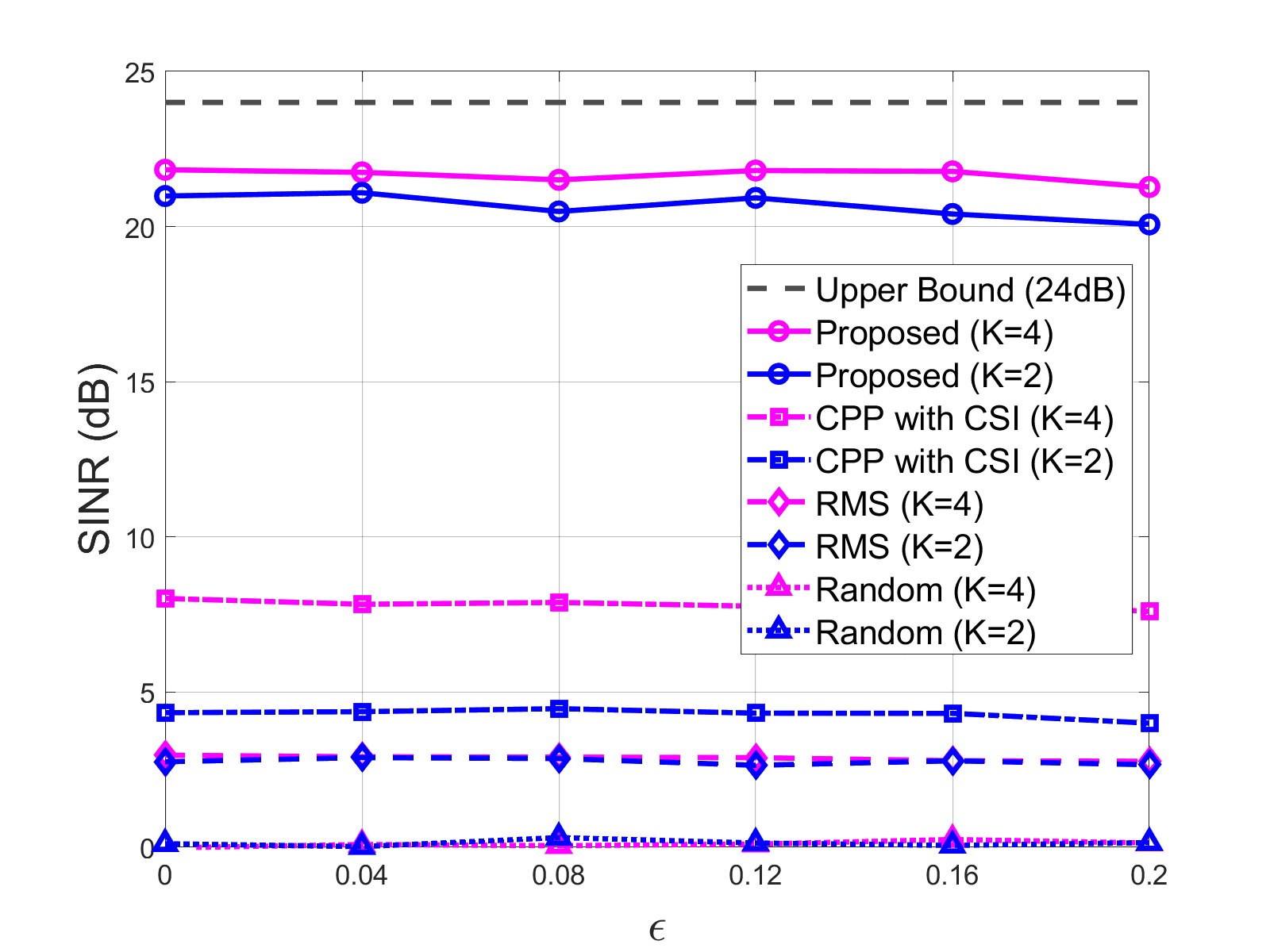}
\caption{$\mathrm{SINR}$ v.s. $\epsilon$.}
\label{fig_SINR_E}
\end{figure}

\subsection{Impact of Imperfect IRS Reflection Coefficients}
\label{SINR_Imperfertion_level}

To evaluate the impact of imperfections in the IRS reflection coefficients—resulting from factors such as insertion loss, quantization errors, coupling effects, and thermal drift—we adopt a comprehensive error model. The insertion loss is set to a fixed value of $\beta = 0.8$. The combined effect of quantization errors, mutual coupling, and thermal drift is modeled as an additive error to the ideal reflection coefficient.
Specifically, the actual reflection coefficient of the $n$-th IRS element when the $k$-th discrete phase shift is applied is given by:
\begin{equation}
\Gamma_{n,k} = \beta e^{j \phi_k} + \epsilon_{n,k},
\end{equation}
where $\phi_k \in \Phi_{K}$ denotes the $k$-th discrete phase shift, and $\epsilon_{n,k} \sim \mathcal{CN}(0, \epsilon)$ represents the additive error, with $\epsilon$ defined as the \emph{imperfection level}. To ensure the magnitude of the reflection coefficient does not exceed unity, any coefficient with $|\Gamma_{n,k}| > 1$ is renormalized as:
\begin{equation}
\Gamma_{n,k} \leftarrow \frac{\Gamma_{n,k}}{|\Gamma_{n,k}|}, \quad \text{if } |\Gamma_{n,k}| > 1.
\end{equation}

Fig.~\ref{fig_SINR_E} illustrates the SINR performance as a function of the imperfection level $\epsilon$, which varies from 0.01 to 0.2. The results show that the performance of the proposed scheme remains stable across different values of $\epsilon$, confirming its robustness against practical imperfections in the IRS reflection coefficients, as discussed in Remark \ref{remark_error}.

\section{Conclusion}
\label{sec_conclusion}
This paper presented a novel blind interference suppression scheme for IRS-assisted wireless communication systems that operates without requiring any CSI. The proposed approach integrates the CSM algorithm for initial phase alignment with a PPI algorithm to effectively suppress interference.
Theoretical performance analysis of the scheme was derived and validated through numerical simulations.
The numerical results under various parameter settings demonstrate that the proposed method reduces the interference power to the noise level, achieving significant improvement in SINR and outperforming existing benchmark schemes under typical interference scenarios.
The proposed scheme offers several key advantages: it does not require prior CSI, supports practical IRS implementations with discrete and imperfect reflection coefficients, scales effectively to multi-UE scenarios, and maintains low computational complexity. These features make it highly suitable for real-world deployment in dynamic interference environments.
Future work may investigate adaptive strategies that dynamically switch between interference suppression and signal enhancement modes under time-varying interference conditions.

\appendix
\subsection*{1. Mean of $W$}
Since the expectation is linear and the variables are independent, we have:
\[
\mathbb{E}[W] = g_{0}
 + \sum_{n=1}^N a_n\mathbb{E}[|g_{n}| e^{j\phi_n}] = g_{0} + SN\cdot\mathbb{E}[|g_{n}| e^{j\phi_n}].
\]
For each $|g_{n}| e^{j\phi_n}$, we have:
\[
\mathbb{E}[|g_{n}| e^{j\phi_n}] = \mathbb{E}[|g_{n}|
] \cdot \mathbb{E}[e^{j\phi_n}].
\]
Given $\phi_n \sim \mathcal{U}[\pi - \omega/2, \pi + \omega/2]$, we have:
\[    \mathbb{E}[e^{j\phi_n}] = \frac{1}{\omega} \int_{\pi - \omega/2}^{\pi + \omega/2} e^{j\phi} d\phi = -\frac{2\sin(\omega/2)}{\omega}.\]
Given $g_{n}\sim\mathcal{CN}(0,\rho^2)$, so $|g_{n}|\sim \text{Rayleigh}(\rho\sqrt{1/2})$. Therefore, we have:
\[\mathbb{E}[|g_{n}|] = \rho\sqrt{1/2}\sqrt{\pi/2} = \frac{\sqrt{\pi}}{2}\rho.\]
\[\mathbb{E}[|g_{n}|^2] = 2(\rho\sqrt{1/2})^2 = \rho^2.\] 
Thus:
\begin{equation}
\begin{split}
    \mathbb{E}[|g_{n}| e^{j\phi_n}] &= \rho\frac{\sqrt{\pi}}{2} \cdot -\frac{2\sin(\omega/2)}{\omega} \\
    &= -\rho\sqrt{\pi}\cdot \frac{\sin(\omega/2)}{\omega}.
\end{split}
\end{equation}
Define the constants:
\[
c = \sqrt{\pi} \frac{\sin(\omega/2)}{\omega}.
\]
The mean of $W$ is:
\begin{equation}
\begin{split}
\mathbb{E}[W] = g_{0}-SN\rho c.
\end{split}
\label{mean_of_W}
\end{equation}

\subsection*{2. Variance of $W$}
Since $g_{0}$ is constant and $|g_{n}| e^{j\phi_n}, n\in\mathcal{N}$ and $V$ are independent:
\begin{equation}
\begin{split}
\mathrm{Var}[W] &= \sum_{n=1}^N a_n\mathrm{Var}[|g_{n}| e^{j\phi_n}]+\sigma^2 \\
&= SN \cdot \mathrm{Var}[|g_{n}| e^{j\phi_n}]+\sigma^2.
\end{split}
\end{equation}
For each $|g_{n}| e^{j\phi_n}$:
\begin{equation}
\begin{split}
\mathrm{Var}[|g_{n}| e^{j\phi_n}] = \mathbb{E}[(|g_{n}| e^{j\phi_n})^2] - \left(\mathbb{E}[|g_{n}| e^{j\phi_n}]\right)
^2. 
\end{split}
\end{equation}
Since $\mathbb{E}[(|g_{n}| e^{j\phi_n}) ^2] = \mathbb{E}[ |g_{n}|
^2] = \rho^2$ and $\mathbb{E}[|g_{n}| e^{j\phi_n}] ^2 =\rho^2c^2.$ Thus:
\[
\mathrm{Var}[|g_{n}| e^{j\phi_n}] =  \rho^2(1 - c^2).
\]
The total variance is:
\begin{equation}
\begin{split}
\mathrm{Var}[W] = SN \rho^2\left(1 - c^2\right)+\sigma^2.
\end{split}
\label{var_of_W}
\end{equation}

\subsection*{3. Expected Power of $W$}
The expected power is:
\begin{equation}
\begin{split}
\mathbb{E}[|W|^2] = \mathrm{Var}[W] + \left|\mathbb{E}[W]\right|^2.
\end{split}
\label{power_of_W}
\end{equation}
Bt substituting (\ref{mean_of_W}) and (\ref{var_of_W}) into (\ref{power_of_W}), we can get:\[
\mathbb{E}[|W|^2] = A S^2 + B S + C.\]
where:
\[
A =  \rho^2c^2 N^2,~B = N [(1 - c^2) \rho^2 - 2 g_{0} c \rho],~C = g_{0}^2 + \sigma^2.
\]

\def\baselinestretch{1}
\bibliographystyle{IEEEbib}
\bibliography{IEEErefs}
\balance 
\end{document}